\begin{document}

\newtheorem{thm}{Theorem}
\newtheorem{cor}{Corollary}
\newtheorem{lem}{Lemma}
\newtheorem{dfn}{Definition}
\newcommand{\qedsymbol}{\vrule width.6em height.5em depth.1em\relax}
\newcommand{\qed}{\mbox{}\hfill\qedsymbol}
\newcommand{\argmin}{\mbox{arg\,min}}

\title{\Huge Multirate Anypath Routing in Wireless Mesh Networks}

\author{\IEEEauthorblockN{Rafael Laufer and Leonard Kleinrock}
 \IEEEauthorblockA{Computer Science Department\\
 University of California at Los Angeles}\\

August 29, 2008\\
Technical Report UCLA-CSD-TR080025
}

\maketitle

\begin{abstract}
In this paper, we present a new routing paradigm that generalizes opportunistic routing in wireless mesh networks. In multirate anypath routing, each node uses both a set of next hops and a selected transmission rate to reach a destination. Using this rate, a packet is broadcast to the nodes in the set and one of them forwards the packet on to the destination. To date, there is no theory capable of jointly optimizing both the set of next hops and the transmission rate used by each node. We bridge this gap by introducing a polynomial-time algorithm to this problem and provide the proof of its optimality. The proposed algorithm runs in the same running time as regular shortest-path algorithms and is therefore suitable for deployment in link-state routing protocols. We conducted experiments in a 802.11b testbed network, and our results show that multirate anypath routing performs on average 80\% and up to 6.4 times better than anypath routing with a fixed rate of 11~Mbps. If the rate is fixed at 1~Mbps instead, performance improves by up to one order of magnitude.
\end{abstract}

\section{Introduction}

The high loss rate and dynamic quality of links make routing in wireless mesh networks extremely challenging~\cite{aguayo04,campista08}. Anypath routing\footnote{We use the term anypath rather than opportunistic routing, since opportunistic routing is an overloaded term also used for opportunistic contacts~\cite{seth07}.} has been recently proposed as a way to circumvent these shortcomings by using multiple next hops for each destination~\cite{chachulski07b,biswas05a,zhong06a,dubois-ferriere07}. Each packet is broadcast to a forwarding set composed of several neighbors, and the packet must be retransmitted only if none of the neighbors in the set receive~it. Therefore, while the link to a given neighbor is down or performing poorly, another nearby neighbor may receive the packet and forward it on. This is in contrast to single-path routing where only one neighbor is assigned as the next hop for each destination. In this case, if the link to this neighbor is not performing well, a packet may be lost even though other neighbors may have overheard it. 

Existing work on anypath routing has focused on wireless networks that use a single transmission rate. This approach, albeit straightforward, presents two major drawbacks. First, using a single rate over the entire network underutilizes available bandwidth resources. Some links may perform well at a higher rate, while others may only work at a lower rate. Secondly and most importantly, the network may become disconnected at a higher bit rate. We provide experimental measurements from a 802.11b testbed which show that this phenomenon is not uncommon in practice. The key problem is that higher transmission rates have a shorter radio range, which reduces network density and connectivity. As the bit rate increases, links becomes lossier and the network eventually gets disconnected. Therefore, in order to guarantee connectivity, single-rate anypath routing must be limited to low rates.

In {\it multirate anypath routing}, these problems do not exist; however, we face additional challenges. First, we must find not only the forwarding set, but also the transmission rate at each hop that jointly minimizes its cost to a destination. Secondly, loss probabilities usually increase with higher transmission rates, so a higher bit rate does not always improve throughput. Finally, higher rates have a shorter radio range and therefore we have a different connectivity graph for each rate. Lower rates have more neighbors available for inclusion in the forwarding set (i.e., more spatial diversity) and less hops between nodes. Higher rates have less spatial diversity and longer routes. Finding the optimal operation point in this tradeoff is the focus of this paper. 

We thus address the problem of finding both a forwarding set and a transmission rate for every node, such that the overall cost of every node to a particular destination is minimized. We call this the {\it shortest multirate anypath problem}. To our knowledge, this is still an open problem~\cite{biswas05a,chachulski07b,zeng08} and we believe our algorithm is the first solution for it.

We introduce a polynomial-time algorithm to the shortest multirate anypath problem and present a proof of its optimality. Our solution generalizes Dijkstra's algorithm for the multirate anypath case and is applicable to link-state routing protocols. One would expect that the running time of such an algorithm is longer than a shortest-path algorithm. However, we show that it has the {\it same} running time as the corresponding shortest-path algorithm, being suitable for implementation at current wireless routers. We also introduce a novel routing metric that generalizes the expected transmission time (ETT) metric~\cite{draves04b} for multirate anypath routing. 

For the performance evaluation, we conducted experiments in an 18-node 802.11b wireless testbed of embedded Linux devices. Our results reveal that the network becomes disconnected if we fix the transmission rate at 2, 5.5, or 11~Mpbs. A single-rate routing scheme therefore performs poorly in this case, since 1~Mbps is the only rate at which the network is fully connected. We show that multirate anypath routing improves the end-to-end expected transmission time by 80\% on average and by up to 6.4 times compared to single-rate anypath routing at 11~Mbps, while still maintaining network connectivity. The performance is even higher over the single-rate case at 1~Mbps, with an average gain of a factor of 5.4 and a maximum gain of a factor of 11.3.

The remainder of paper is organized as follows. Section~\ref{sec:anypath} reviews the basic concepts of anypath routing and our network model. In Section~\ref{sec:multirate}, we introduce multirate anypath routing and the proposed routing metric. Section~\ref{sec:shortest} presents the multirate anypath algorithm and proves its optimality. Section~\ref{sec:results} reveals our experimental results, showing the benefits of multirate over single-rate anypath routing. Section~\ref{sec:related} presents the related work in anypath routing. Finally, conclusions are presented in Section~\ref{sec:conclusions}.

\section{Anypath Routing}
\label{sec:anypath}

In this section we review the theory of anypath routing introduced by Zhong~{\it et~al.}~\cite{zhong06a} and Dubois-Ferrière~{\it et~al.}~\cite{dubois-ferriere07}.
The main contributions of the paper are presented later in Sections~\ref{sec:multirate} and~\ref{sec:shortest}.

\subsection{Overview}

In classic wireless network routing, each node forwards a packet to a single next hop. As a result, if the transmission to that next hop fails, the node needs to retransmit the packet even though other neighbors may have overheard it. In contrast, in anypath routing, each node broadcasts a packet to {\it multiple} next hops simultaneously. Therefore, if the transmission to one neighbor fails, an alternative neighbor who received the packet can forward it on. We define this set of multiple next hops as the {\it forwarding set} and we usually use~$J$ to represent it throughout the paper. A different forwarding set is used to reach each destination, in the same way a distinct next hop is used for each destination in classic routing. 

When a packet is broadcast to the forwarding set, more than one node may receive the same packet. To avoid unnecessary duplicate forwarding, only one of these nodes should forward the packet on. For this purpose, each node in the set has a priority in relaying the received packet. A node only forwards a packet if all higher priority nodes in the set failed to do so. Higher priorities are assigned to nodes with shorter distances to the destination. As a result, if the node with the shortest distance in the forwarding set successfully received the packet, it forwards the packet to the destination while others suppress their transmission. Otherwise, the node with the second shortest distance forwards the packet, and so~on. A reliable anycast scheme~\cite{jain08} is necessary to enforce this relay priority. We talk more about this in Section~\ref{sec:model_assumptions}. The source keeps rebroadcasting the packet until someone in the forwarding set receives it or a threshold is reached. Once a neighbor in the set receives the packet, this neighbor repeats the same procedure until the packet is delivered to the destination.
Since we now use a set of next hops to forward packets, every two nodes will be connected through a mesh composed of the union of multiple paths. Figure~\ref{fig:anypath} depicts this scenario where each node uses a set of neighbors to forward packets. The forwarding sets are defined by the multiple bold arrows leaving each node. We define this union of paths between two nodes as an {\it anypath}. In the figure, the anypath shown in bold is composed by the union of 11 different paths between a source~$s$ and a destination~$d$. Depending on the choice of each forwarding set, different paths are included in or excluded from the anypath. At every hop, only a single node of the set forwards the packet on. Consequently, every packet from~$s$ traverses only one of the available paths to reach~$d$. We show a path possibly taken by a packet using a dashed line. Succeeding packets, however, may take completely different paths; hence the name anypath. The path taken is determined on-the-fly, depending on which nodes of the forwarding sets successfully receive the packet at each hop.

\begin{figure}[ht!]
\centering
\includegraphics[width=.35\textwidth]{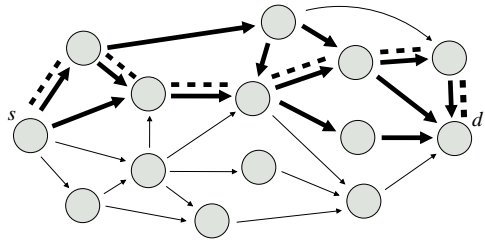}
\caption{An anypath connecting nodes $s$ and $d$ is shown in bold arrows. The anypath is composed of the union of 11 paths between the two nodes. Every packet sent from $s$ traverses one of these paths to reach $d$, such as the path shown with a dashed line. Different packets may traverse different paths, depending on which nodes receive the forwarded packet at each hop; hence the name anypath.}
\label{fig:anypath}
\end{figure}

\subsection{System Model and Assumptions}
\label{sec:model_assumptions}

In order to support the point-to-multipoint links used in anypath routing, we model the wireless mesh network as a hypergraph. A hypergraph $\mathcal{G}=(V,\mathcal{E})$ is composed of a set~$V$  of vertices or nodes and a set~$\mathcal{E}$ of hyperedges or hyperlinks. A hyperlink is an ordered pair $(i,J)$, where $i \in V$ is a node and~$J$ is a nonempty subset of~$V$ composed of neighbors of $i$. For each hyperlink $(i,J) \in \mathcal{E}$, we have a delivery probability~$p_{iJ}$ and a distance~$d_{iJ}$. If the set $J$ has a single element~$j$, then we just use $j$ instead of $J$ in our notation. In this case, $p_{ij}$ and $d_{ij}$ denote the link delivery probability and distance, respectively. 

The hyperlink delivery probability $p_{iJ}$ is defined as the probability that a packet transmitted from~$i$ is successfully received by at least one of the nodes in~$J$. One would expect that the receipt of a packet at each neighbor is correlated due to noise and interference. However, we conducted experiments which suggest that the loss of a packet at different receivers occur independently in practice, which is also consistent with other studies~\cite{reis06,miu05}. We show these results in Section~\ref{sec:results}. With the assumption of independent losses, $p_{iJ}$ is
\begin{equation}
\label{eq:p_iJ}
p_{iJ} = 1-\prod_{j \in J} \left(1-p_{ij}\right).
\end{equation}

Previously proposed MAC protocols have been designed to guarantee the relay priority among the nodes in the forwarding set~\cite{biswas05a,zorzi03,jain08}. Such protocols can use different strategies for this purpose, such as time-slotted access, prioritized contention and frame overhearing. Reliable anycast is an active area of research~\cite{jain08} and we assume that such mechanism is in place to make sure that the relaying priority is respected. The details of the MAC, however, are abstracted from the routing layer. Practical routing protocols only incorporate the delivery probabilities into the routing metric in order to abstract from the MAC details~\cite{couto03,draves04b} and we take the same approach. The only MAC aspect that is important is the effectiveness of the relaying node selection. As long as the relaying node is actually the one with the shortest distance to the destination, there should be no significant impact on the routing performance.

\subsection{Anypath Cost}
\label{sec:anypath_cost}

We are interested in calculating the anypath cost from a node~$i$ to a given destination via a forwarding set~$J$. The anypath cost $D_i$ is defined as $D_i = d_{iJ} + D_J$, which is composed of the hyperlink cost $d_{iJ}$ from~$i$ to~$J$ and the remaining-anypath cost $D_J$ from~$J$ to the destination.

The hyperlink cost $d_{iJ}$ depends on the routing metric used. 
Most previous works on anypath routing have adopted the expected number of anypath transmissions (EATX) as the routing metric~\cite{zhong06a, chachulski07b, dubois-ferriere07}. The EATX is a generalization of the unidirectional ETX metric~\cite{couto03}, which is defined as $d_{ij} = 1/p_{ij}$. The distance~$d_{ij}$ for ETX represents the expected number of transmissions necessary for a packet sent by~$i$ to be successfully received by~$j$. For EATX, the distance~$d_{iJ}$ is defined as $d_{iJ} = 1/p_{iJ}$, which is the average number of transmissions necessary for at least one node in~$J$ to correctly receive the transmitted packet.

The remaining-anypath cost $D_J$ is intuitively defined as a {\it weighted average} of the distances of the nodes in the forwarding set as
\begin{equation}
\label{eq:D_J_def} D_J = \sum_{j \in J} w_jD_j, \mathrm{\ with\ }\sum_{j \in J} w_j = 1,
\end{equation}
where the weight~$w_j$ in~(\ref{eq:D_J_def}) is the probability of node~$j$ being the relaying node. For example, let $J = \{1,2,\ldots,n\}$ with distances $D_1 \leq D_2 \leq \ldots \leq D_n$. We refer to the probability $p_{ij}$ simply by $p_j$ for convenience. Node~$j$ will be the relaying node only when it receives the packet and none of the nodes closer to the destination receives it, which happens with probability $p_j(1-p_{j-1})(1-p_{j-2})\ldots(1-p_1)$. The weight~$w_j$ is then defined as
\begin{equation}
\label{eq:w_j_def} w_j = \frac{\displaystyle p_j\prod_{k=1}^{j-1}\left(1-p_k\right)}{1-\displaystyle\prod_{j \in J} \left(1-p_j\right)},
\end{equation}
with the denominator being the normalizing constant. 

As an example, consider the network depicted in Figure~\ref{fig:example}. The distance via $J$ in Figure~\ref{fig:example-a} is calculated as 
\setlength{\arraycolsep}{0.0em}
\begin{eqnarray}
\nonumber D_i &{}={}& d_{iJ} + D_J \\
\nonumber    &{}={}& \frac{1}{1-(1-1/4)(1-1/5)} + \frac{(1/4)3 + (3/4)(1/5)3}{1-(1-1/4)(1-1/5)} \\
             &{}={}& 2.5 + 3.0 = 5.5. 
\end{eqnarray}
\setlength{\arraycolsep}{5pt}%
One would expect that adding an extra node to the forwarding set is always beneficial because it increases the number of possible paths a packet can take. However, this is not always true, as shown in Figure~\ref{fig:example-b}. The anypath distance via $J' = J \cup \{j\}$ is $D_i = d_{iJ'} + D_{J'} = 1.8 + 4.6 = 6.4$. On one hand, using~$J'$ instead of~$J$ reduces the hyperlink cost, that is, $d_{iJ'} \leq d_{iJ}$. On the other hand, the extra node increases the remaining anypath cost, that is, $D_{J'} \geq D_J$. If the increase $D_{J'}-D_J$ is higher than the decrease $d_{iJ} - d_{iJ'}$, adding this extra node is not worthy since the total cost to reach the destination increases. The intuition here is that when node~$j$ is the only one in~$J'$ that received the packet, it is cheaper to retransmit the packet to one of the two nodes in~$J$ and take a shorter path from there than to take the long path via node~$j$.

\begin{figure}[ht!]
\centering
\subfigure[]{
  \label{fig:example-a}
  \includegraphics[width=.20\textwidth]{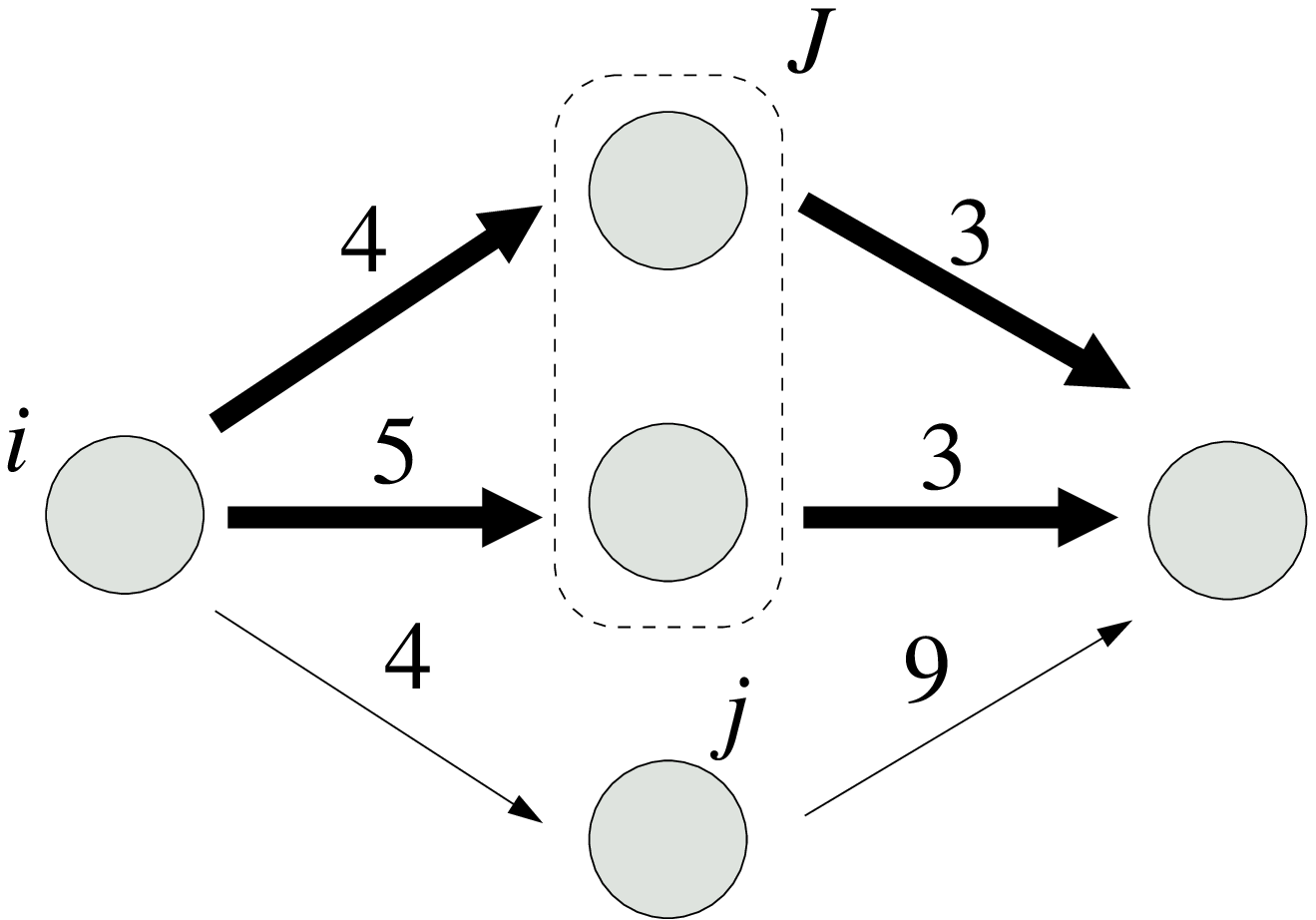}
}
\subfigure[]{
  \label{fig:example-b}
  \includegraphics[width=.20\textwidth]{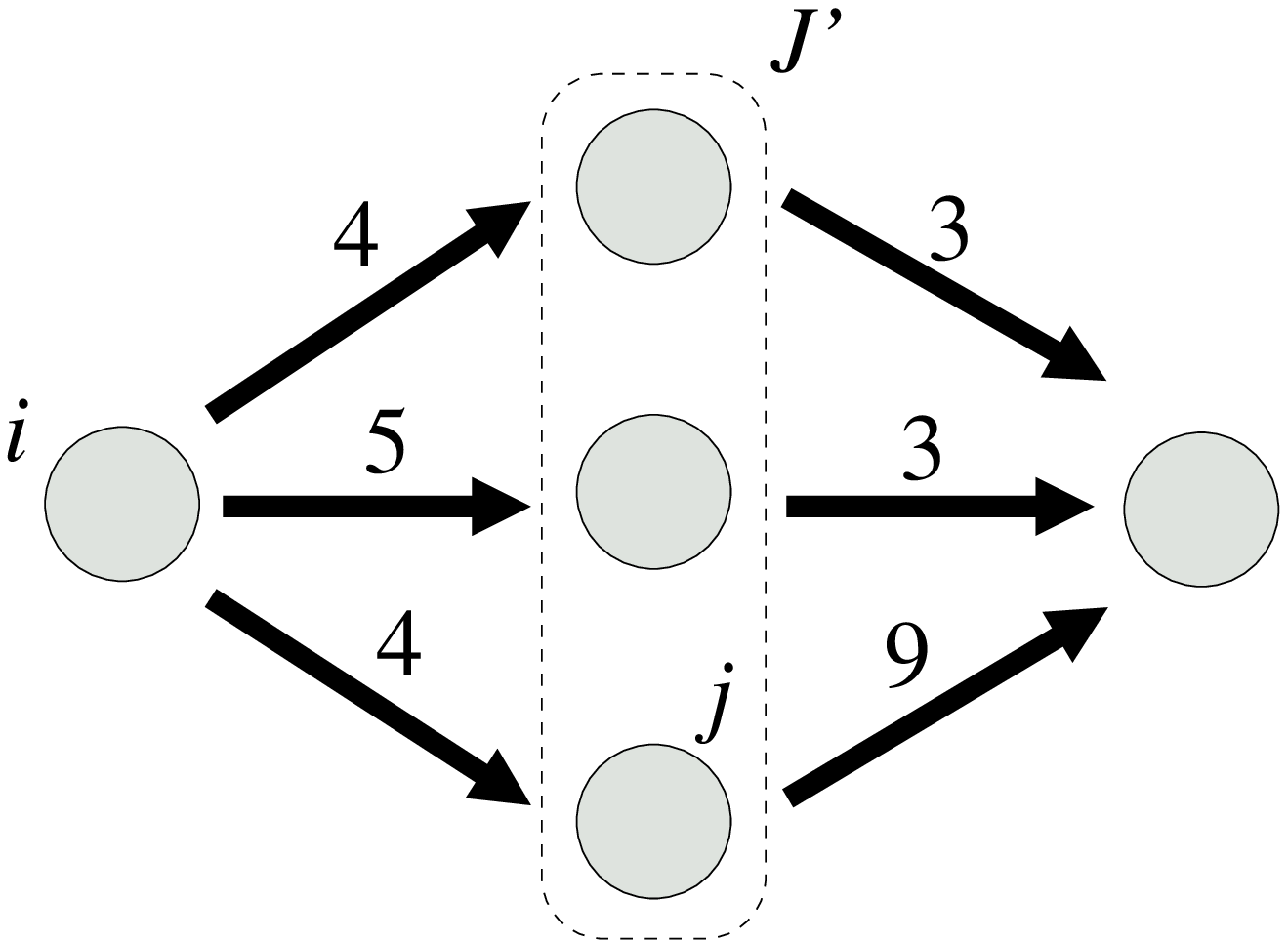}
}
\caption{An anypath cost calculation example. The weight of each link is the expected number of transmissions (ETX), which is the inverse of the link delivery probability. The anypath cost in~(a) is lower than the cost in~(b).}
\label{fig:example}
\end{figure}

Once the cost of an anypath is defined, it is of interest to find the anypath with the lowest cost to the destination, that is, the shortest anypath. This is called the {\it shortest-anypath problem}~\cite{dubois-ferriere07}. Interestingly enough, the shortest anypath will always have an equal or lower cost than the shortest single path. This is a direct consequence of the definition of an anypath as a set of paths. Among all possible anypaths between two nodes, we also have the anypath composed only of the path with the shortest ETX. Therefore, if we are to choose the shortest anypath among all these possibilities, we know for sure that its cost can never be higher than the cost of the shortest single path.

\section{Multirate Anypath Routing}
\label{sec:multirate}

Previous work on anypath routing focused on a single bit rate~\cite{biswas05a,zhong06a,chachulski07b,dubois-ferriere07}. Such an assumption, however, considerably underutilizes available bandwidth resources. Some hyperlinks may be able to sustain a higher transmission rate, while others may only work at a lower rate. To date, the problem of how to select the transmission rate for anypath routing is still open~\cite{chachulski07a,zeng08}. We provide a solution to this problem and incorporate the multirate capability inherent in IEEE 802.11 networks into anypath routing. In this case, besides selecting a set of next hops to forward packets, a node must also select one among multiple transmission rates. For each destination, a node then keeps both a forwarding set and a transmission rate used to reach this set. As a result, every two nodes will be connected through a mesh composed of the union of multiple paths, with each node transmitting at a selected rate. Figure~\ref{fig:multirate_anypath} depicts the scenario where nodes use a selected bit rate to forward packets to a set of neighbors. We define this union of paths between two nodes, with each node using a potentially different bit rate as a {\it multirate anypath}. In the figure, assume that a packet is sent from~$s$ to~$d$ over the multirate anypath. Only one of the available paths is traversed depending on which nodes successfully receive the packet at each hop. We show a path possibly taken by the packet using a dashed line. We use different dash lengths to represent the different transmission rates used by each node. A shorter dash represents a shorter time to send a packet, hence a higher transmission rate. Succeeding packets may take completely different paths with other transmission rates along its way.

\begin{figure}[ht!]
\centering
\includegraphics[width=.35\textwidth]{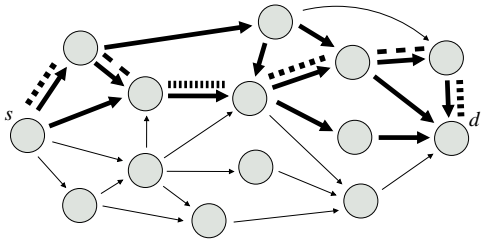}
\caption{A multirate anypath connecting nodes $s$ and $d$ is shown in bold arrows. Every packet sent from~$s$ traverses a path to reach~$d$, such as the path shown with dashed lines. Different dash lengths represent the different bit rates used by each node, with a shorter dash for higher rates.}
\label{fig:multirate_anypath}
\end{figure}

In order to support multirate, we must extend the system model in Section~\ref{sec:model_assumptions}. Let~$R$ be the set of available bit rates that nodes can use to transmit their packets. For each hyperlink $(i,J) \in \mathcal{E}$, we now have a delivery probability~$p_{iJ}^{(r)}$ and a distance~$d_{iJ}^{\,(r)}$ associated with each transmission rate $r \in R$. In real wireless networks, we usually have different delivery probabilities and distances for each transmission rate, which justifies this model extension. 

The EATX metric described in Section~\ref{sec:anypath_cost} was originally designed considering that nodes transmit at a single bit rate. To account for multiple bit rates, we introduce the expected anypath transmission time (EATT) metric. For EATT, the hyperlink distance $d_{iJ}^{\,(r)}$ for each rate $r \in R$ is defined as
\begin{equation}
\label{eq:d_iJ_ETT} d_{iJ}^{\,(r)} = \frac{1}{p_{iJ}^{(r)}} \times \frac{s}{r},
\end{equation}
where $p_{iJ}^{(r)}$ is the hyperlink delivery probability defined in~(\ref{eq:p_iJ}), $s$ is the maximum packet size, and $r$ is the bit rate. The distance $d_{iJ}^{\,(r)}$ is basically the time it takes to transmit a packet of size~$s$ at a bit rate~$r$ over a lossy hyperlink with delivery probability~$p_{iJ}^{(r)}$. The EATT metric is a generalization of the expected transmission time (ETT) metric~\cite{draves04b} commonly used in single-path wireless routing. Note that for each bit rate $r \in R$, we have a different delivery probability~$p_{iJ}^{(r)}$, which usually decreases for higher rates. This behavior imposes a tradeoff; a higher bit rate decreases the time of a single packet transmission (i.e., $s/r$ decreases), but it usually increases the number of transmissions required for a packet to be successfully received (i.e., $1/p_{iJ}^{(r)}$ increases).

The remaining-anypath cost~$D_J^{(r)}$ now also depends on the transmission rate, since the delivery probabilities change for each rate. Since both the hyperlink distance and the remaining anypath cost depend on the bit rate, node~$i$ has a different anypath cost~$D_i^{(r)} = d_{iJ}^{\,(r)} + D_J^{(r)}$ for each forwarding set~$J$ and for each transmission rate $r \in R$. The remaining-anypath cost~$D_J^{(r)}$ for a rate $r \in R$ is defined as
\begin{equation}
\label{eq:D_J_r_def} D_J^{(r)} = \sum_{j \in J} w_j^{(r)}D_j, \mathrm{\ with\ }\sum_{j \in J} w_j^{(r)} = 1,
\end{equation}
where the weight~$w_j^{(r)}$ in~(\ref{eq:D_J_r_def}) is the probability of node~$j$ being the relaying node and $D_j = \min_{r \in R} D_j^{(r)}$ is the shortest distance from node~$j$ to the destination among all rates. We refer to the probability $p_{ij}^{(r)}$ simply by $p_j^{(r)}$ for convenience. The weight~$w_j^{(r)}$ is then defined as
\begin{equation}
w_j^{(r)} = \frac{\displaystyle p_j^{(r)}\prod_{k=1}^{j-1}\left[1-p_k^{(r)}\right]}{1-\displaystyle\prod_{j \in J} \left[1-p_j^{(r)}\right]}.
\end{equation}

We address the problem of finding both the forwarding set and the transmission rate that minimize the overall cost to reach a particular destination. We call this the {\it shortest multirate anypath problem}, which generalizes the shortest-anypath problem~\cite{dubois-ferriere07} for the multirate scenario. Interestingly, the shortest multirate anypath will always have equal or lower cost than the shortest single path. Among all possible multirate anypaths between two nodes, we also have the single path with the shortest ETT. As a result, the cost of the shortest multirate anypath can never be higher than the cost of the shortest path. Likewise, due to the same argument, the shortest multirate anypath will also have equal or lower cost than any shortest anypath using a single transmission rate.

\section{Finding the Shortest Multirate Anypath}
\label{sec:shortest}

In this section we introduce the proposed shortest-anypath algorithms. In Section~\ref{sec:single-rate}, we present the Shortest Anypath First (SAF) algorithm used in a single-rate network with the EATX metric. A similar single-rate algorithm along with the same optimization was also proposed by Chachulski~\cite{chachulski07b}. We, however, derived this algorithm independently and later in Section~\ref{sec:multi-rate} we introduce a generalization of this algorithm for multiple rates. Surprisingly, the Shortest Multirate Anypath First (SMAF) algorithm has the {\it same} running time as a shortest single-path algorithm for multirate.
We only show the proof of optimality of the SMAF algorithm, since by definition this implies the optimality of the SAF algorithm.

\subsection{The Single-Rate Case}
\label{sec:single-rate}

\begin{figure*}[ht!]
\centering
\subfigure[]{
  \label{fig:saf-execution-a}
  \includegraphics[width=.23\textwidth]{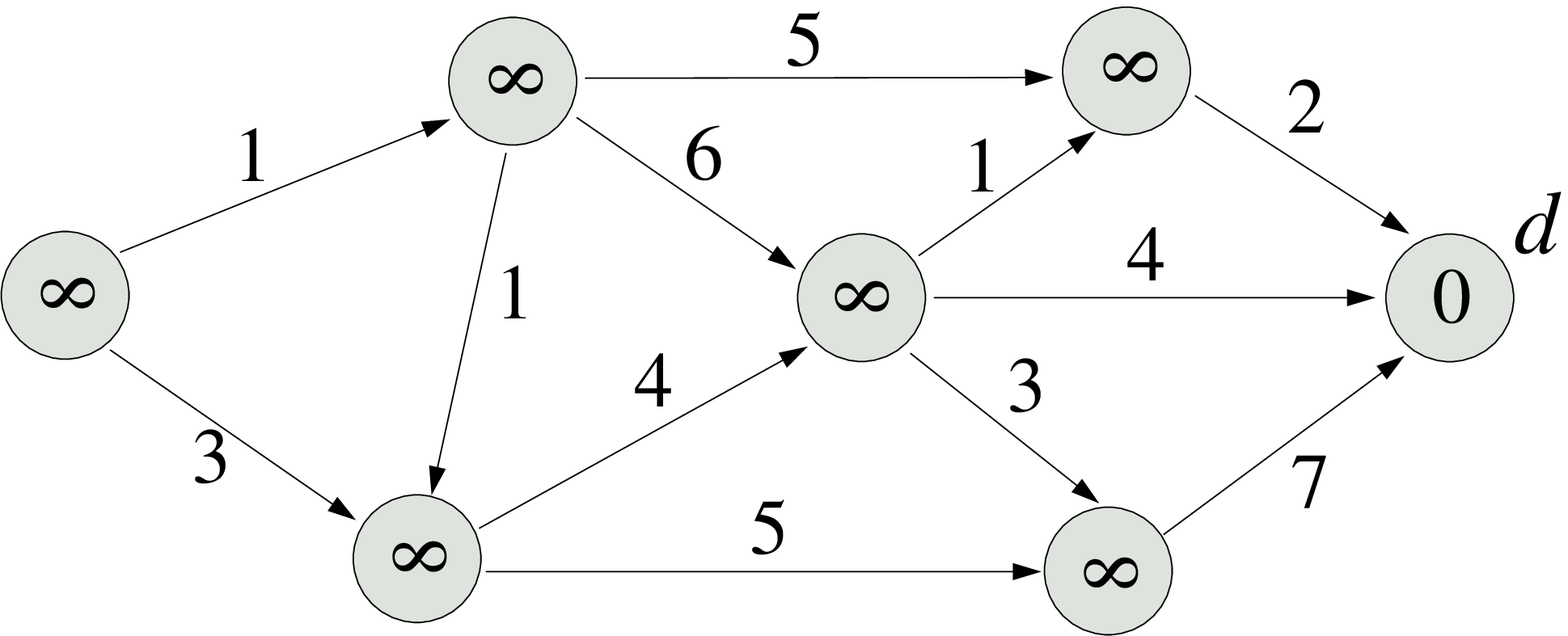}
}
\subfigure[]{
  \label{fig:saf-execution-b}
  \includegraphics[width=.23\textwidth]{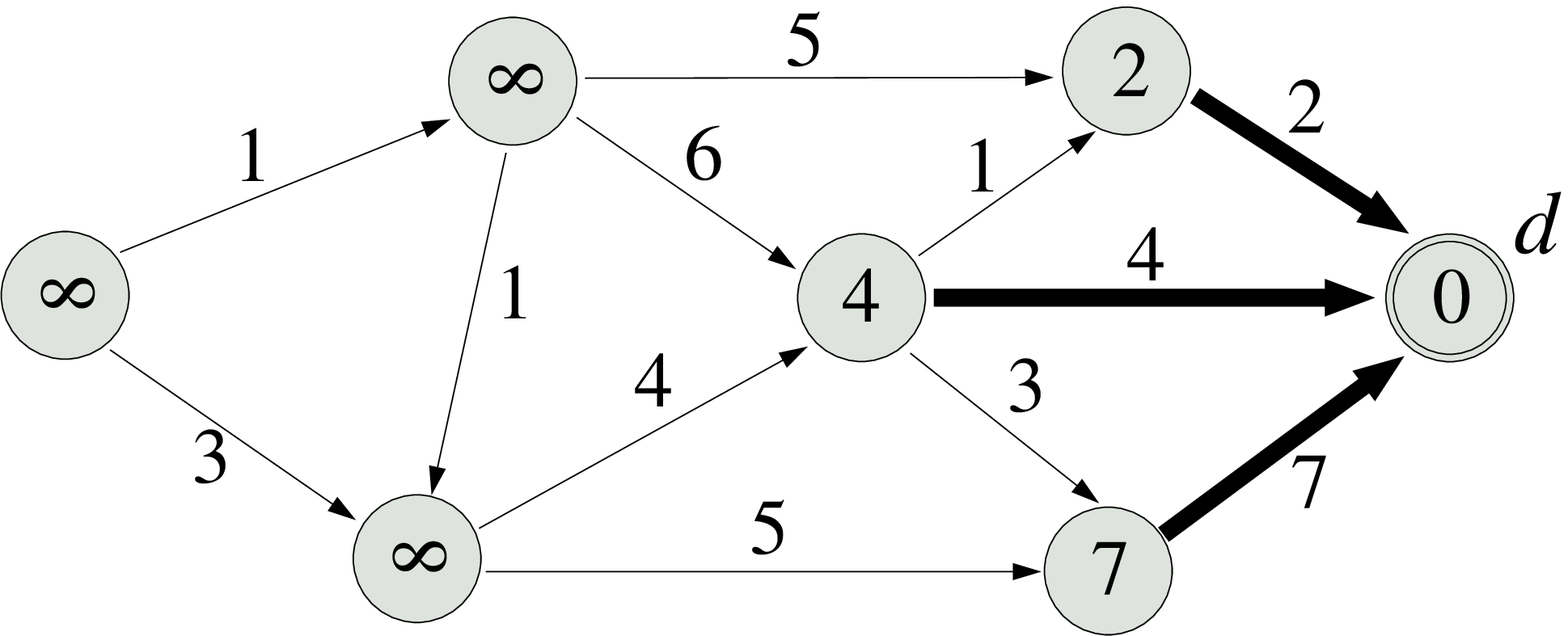}
}
\subfigure[]{
  \label{fig:saf-execution-c}
  \includegraphics[width=.23\textwidth]{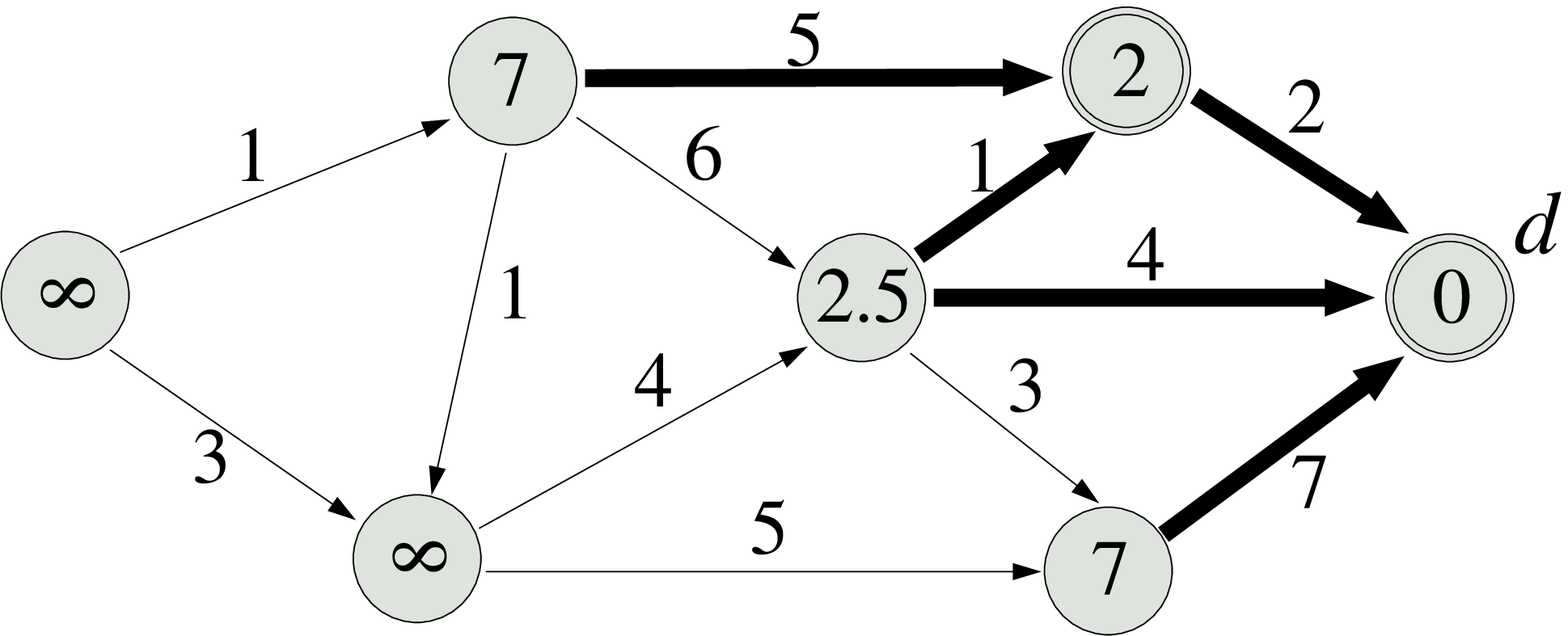}
}
\subfigure[]{
  \label{fig:saf-execution-d}
  \includegraphics[width=.23\textwidth]{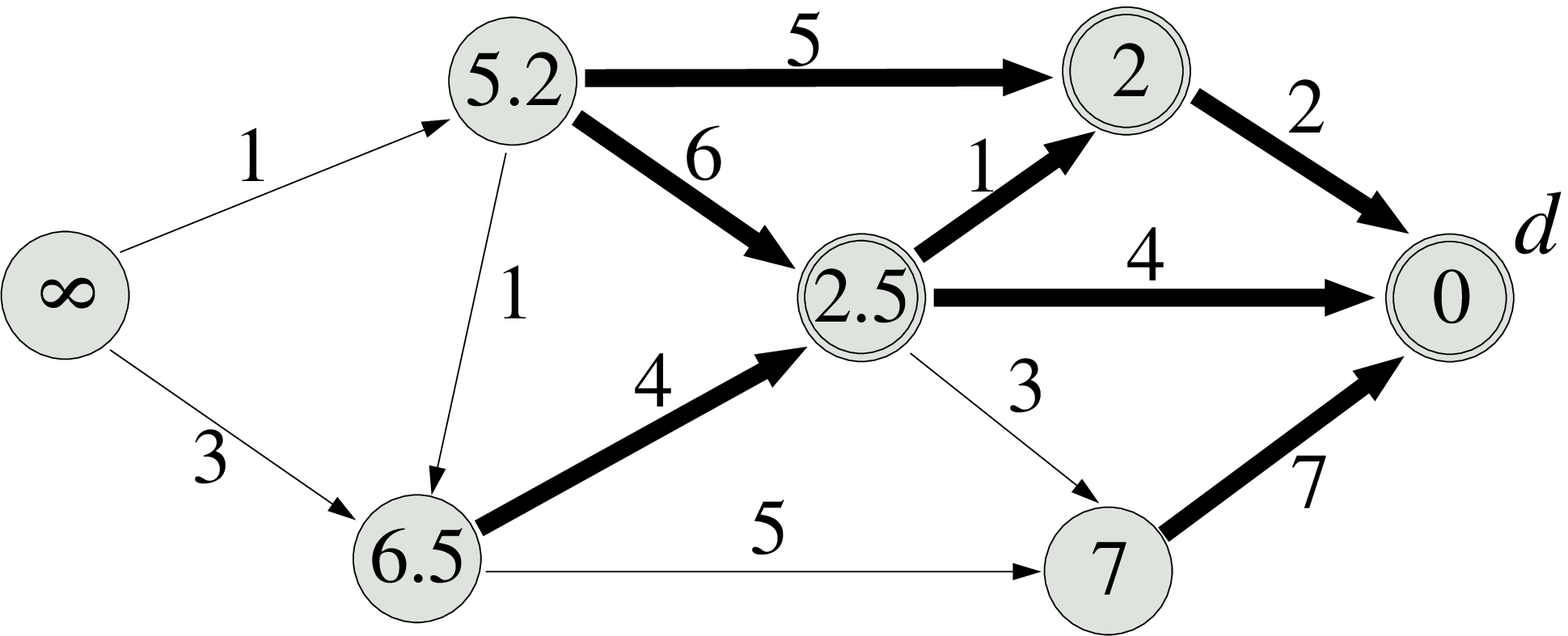}
}
\subfigure[]{
  \label{fig:saf-execution-e}
  \includegraphics[width=.23\textwidth]{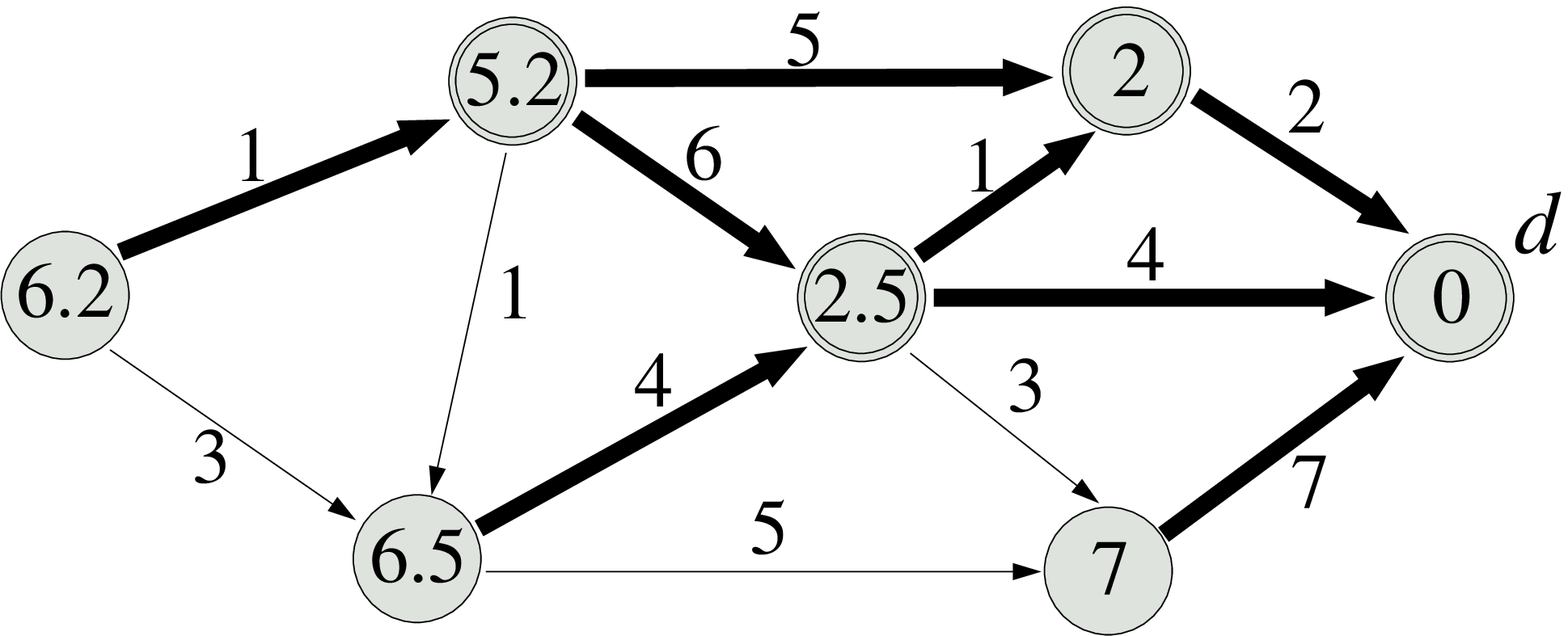}
}
\subfigure[]{
  \label{fig:saf-execution-f}
  \includegraphics[width=.23\textwidth]{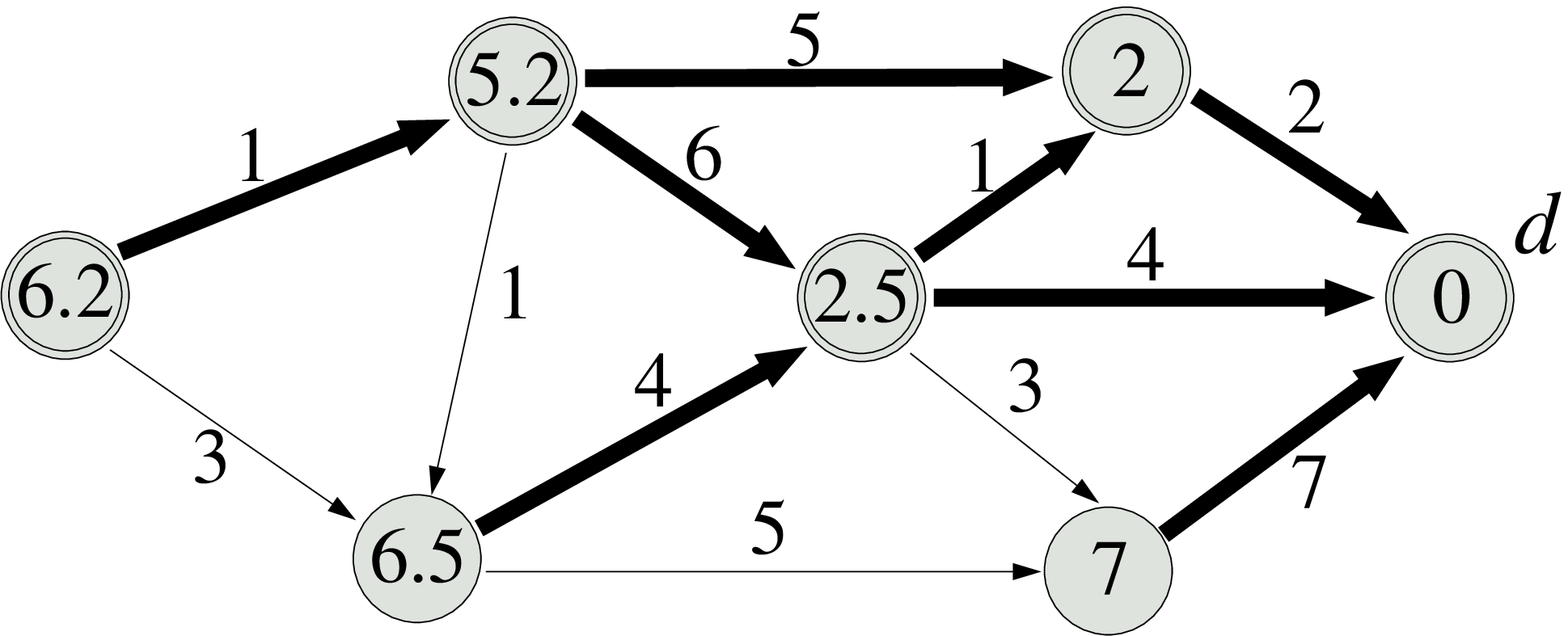}
}
\subfigure[]{
  \label{fig:saf-execution-g}
  \includegraphics[width=.23\textwidth]{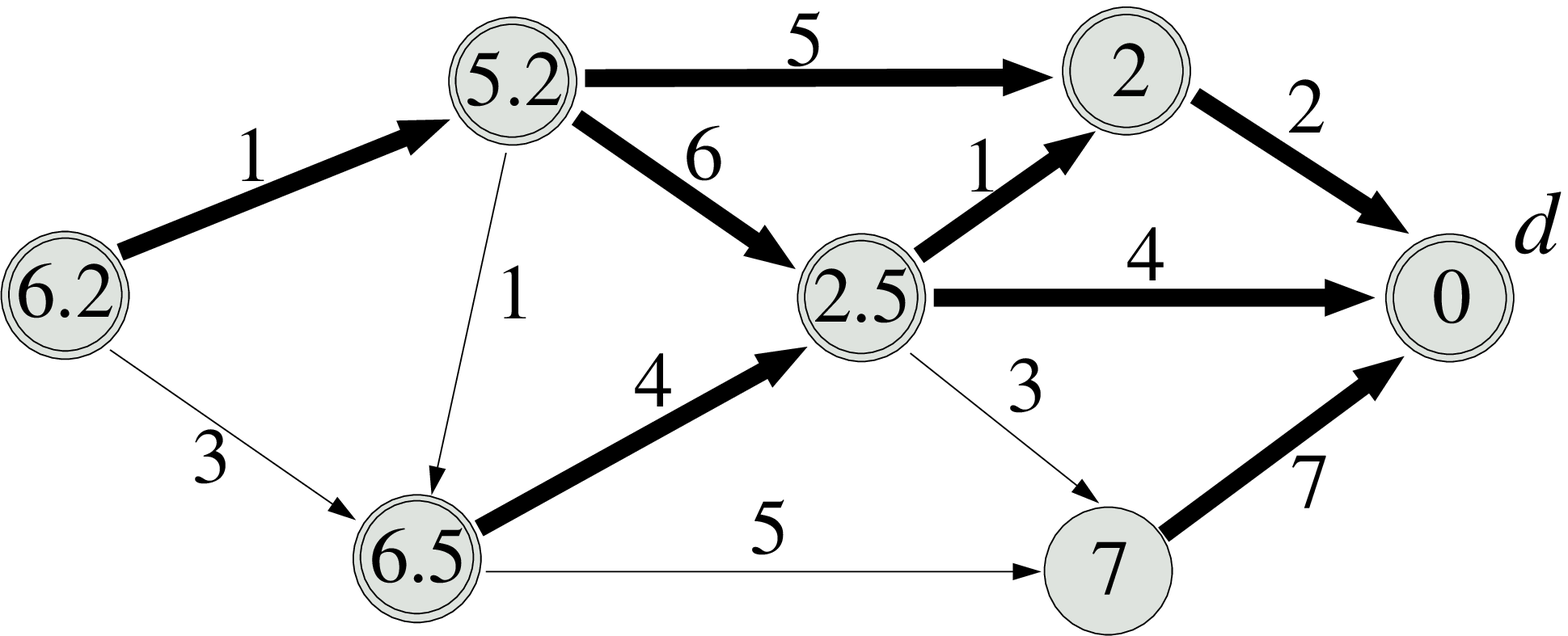}
}
\subfigure[]{
  \label{fig:saf-execution-h}
  \includegraphics[width=.23\textwidth]{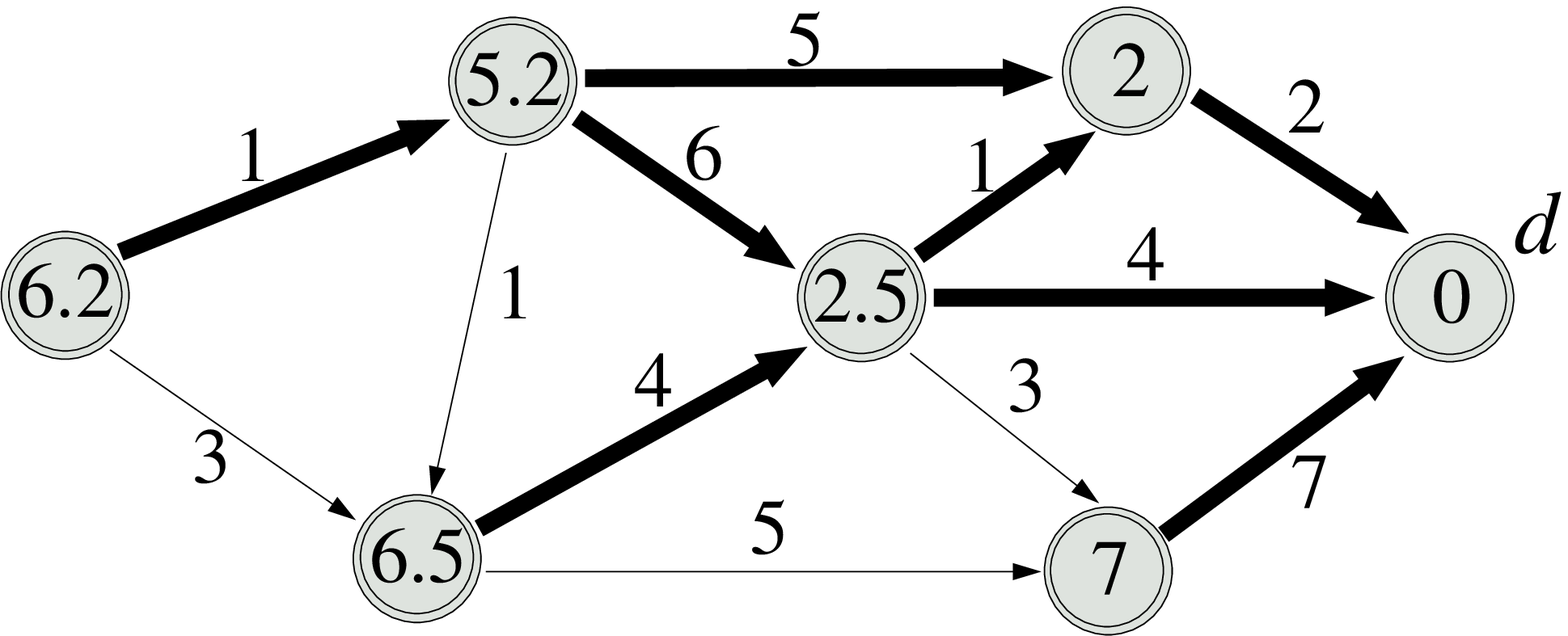}
}

\caption{Execution of the Shortest Anypath First (SAF) algorithm from every node to~$d$. The weight of each link is the expected number of transmissions (ETX), which is the inverse of the link delivery probability. (a) The situation just after the initialization. (b)--(h) The situation after each successive iteration of the algorithm. Part (h) shows the situation after the last node is settled.}
\label{fig:saf-execution}
\end{figure*}

We now present the Shortest Anypath First algorithm used in the simpler single-rate scenario. Given a graph $G=(V,E)$, the algorithm calculates the shortest anypaths from all nodes to a destination~$d$. For every node $i \in V$ we keep an estimate~$D_i$, which is an upper-bound on the distance of the shortest anypath from~$i$ to~$d$. In addition, we also keep a forwarding set $F_i$ for every node, which stores the set of nodes used as the next hops to reach~$d$. Finally, we keep two data structures, namely~$S$ and~$Q$. The $S$ set stores the set of nodes for which we already have a shortest anypath defined. 
We store each node $i \in V - S$ for which we still do not have a shortest anypath in a priority queue~$Q$ keyed by their $D_i$ values.
\\
\begin{algorithm}{Shortest-Anypath-First}{G,d}
\begin{FOR}{\EACH \text{\bf node} i \IN V}
D_i \= \infty \\
F_i \= \emptyset
\end{FOR} \\
D_d \= 0 \\
S \= \emptyset \\
Q \= V \\
\begin{WHILE}{Q \neq \emptyset}
j \= \CALL{Extract-Min}(Q)\\
S \= S \cup \{j\} \\
\begin{FOR}{\EACH \text{\bf incoming edge} (i,j) \IN E}
J \= F_i \cup \{j\} \\
\begin{IF}{D_i > D_j}
D_i \= d_{iJ} + D_J \\
F_i \= J
\end{IF}
\end{FOR} 
\end{WHILE}
\end{algorithm}

Lines~1--3 initialize the state variables $D_i$ and $F_i$ and line~4 sets to zero the distance from node $d$ to itself. Lines~5--6 initialize the $S$ and $Q$ data structures. Initially, we do not have the shortest anypath from any node, so $S$ is initially empty and thus $Q$ contains all the vertices in the graph. 
As in the shortest-path algorithm, the Shortest Anypath First algorithm is composed of $|V|$ rounds, dictated by the number of elements initially in~$Q$. At each round, the {\sc Extract-Min} procedure extracts the node with the minimum distance to the destination from~$Q$. Let this node be $j$. At this point, $j$ is settled and inserted into~$S$, since the shortest anypath from~$j$ to the destination is now known. 
For each incoming edge $(i,j) \in E$, we check if the distance~$D_i$ is larger than the distance $D_j$. If that is the case, then node~$j$ is added to the forwarding set $F_i$ and the distance $D_i$ is updated.

Figure~\ref{fig:saf-execution} shows the execution of Shortest Anypath First algorithm using the EATX metric. We see in Figure~\ref{fig:saf-execution-a} the graph right after the initialization. Figures~\ref{fig:saf-execution-b}--\ref{fig:saf-execution-h} show each iteration of the algorithm. At each step, the value inside a node~$i$ presents the distance $D_i$ from that node to the destination~$d$ and the arrows in boldface present the shortest anypath to~$d$. Nodes with two circles are the settled nodes in~$S$. The graph in Figure~\ref{fig:saf-execution-h} shows the result of SAF algorithm right after settling the last node.

The running time of the Shortest Anypath First algorithm depends on how $Q$ is implemented. Assuming that we have a Fibonacci heap, the cost of each of the $|V|$ {\sc Extract-Min} operations in line~8 takes $O(\log V)$, with a total of $O(V \log V)$ aggregated time. The running time to calculate both $d_{iJ}$ and $D_J$ in line~13 depends on the size of~$J$; however, if we store additional state, it can be reduced to a constant time, as we show in the next paragraph. The {\bf for} loop of lines 10--13 takes $O(E)$ aggregated time and as a result the total complexity of the algorithm is $O(V\log V + E)$, which is the same complexity of Dijkstra's algorithm.

To reduce the running time of the calculation of the node distance~$D_i$ in line~13 to $O(1)$, we keep two additional state variables for each node~$i$, namely $\alpha_i$ and $\beta_i$. In $\alpha_i$, we store
\begin{equation}
\alpha_i \leftarrow 1 + \sum_{j \in F_i} p_jD_j\prod_{k = 1}^{j-1} \left(1-p_k\right),
\end{equation}
and in $\beta_i$ we store
\begin{equation}
\beta_i \leftarrow \prod_{j \in F_i} \left(1-p_j\right).
\end{equation}
Suppose now that we must update the forwarding set~$F_i$ to include a new node~$n$, that is, $F_i \leftarrow F_i \cup \{n\}$. First, we update the state variables $\alpha_i$ and $\beta_i$ to
\setlength{\arraycolsep}{0.0em}
\begin{eqnarray}
\nonumber \alpha_i &{}\leftarrow{}& \alpha_i + \beta_i\,p_nD_n \\
          \beta_i  &{}\leftarrow{}& \beta_i\,(1-p_n),
\end{eqnarray}
\setlength{\arraycolsep}{5pt}%
and finally we update~$D_i$ with
\begin{equation}
\label{eq:D_J_update} D_i \leftarrow \frac{\alpha_i}{1-\beta_i}.
\end{equation}

\vspace{.1cm}
\subsection{The Multirate Case}
\label{sec:multi-rate}

We now generalize the SAF algorithm to support multiple transmission rates, introducing the Shortest Multirate Anypath First (SMAF) algorithm. 
For each node $i \in V$, we now keep a different distance estimate~$D_i^{(r)}$ for every rate $r \in R$. The estimate~$D_i^{(r)}$ is an upper-bound on the distance of the shortest anypath from~$i$ to~$d$ using transmission rate~$r$. In addition, we also keep its corresponding forwarding set~$F_i^{(r)}$, which stores the set of next hops used for~$i$ to reach~$d$ using~$r$. We use~$D_i$ and~$F_i$ without the indicated rates to store the minimum distance estimate among all rates and its corresponding forwarding set, respectively. We also keep a transmission rate $T_i$ for every node, which stores the optimal rate used to reach~$d$. 
\\
\begin{algorithm}{Shortest-Multirate-Anypath-First}{G,d}
\begin{FOR}{\EACH \text{\bf node} i \IN V}
D_i \= \infty \\
F_i \= \emptyset \\
T_i \= \NIL \\
\begin{FOR}{\EACH \text{\bf rate} r \IN R}
D_i^{\,(r)} \= \infty \\
F_i^{\,(r)} \= \emptyset
\end{FOR}
\end{FOR} \\
D_d \= 0 \\
S \= \emptyset \\
Q \= V \\
\begin{WHILE}{Q \neq \emptyset}
j \= \CALL{Extract-Min}(Q)\\
S \= S \cup \{j\} \\
\begin{FOR}{\EACH \text{\bf incoming edge} (i,j) \IN E}
\begin{FOR}{\EACH \text{\bf rate} r \IN R}
J \= F_i^{(r)} \cup \{j\} \\
\begin{IF}{D_i^{(r)} > D_j}
D_i^{\,(r)} \= d_{iJ}^{\,(r)} + D_J^{(r)} \\
F_i^{\,(r)} \= J \\
\begin{IF}{D_i > D_i^{(r)}}
D_i \= D_i^{(r)} \\
F_i \= F_i^{(r)} \\
T_i \= r
\end{IF}
\end{IF}
\end{FOR} 
\end{FOR}
\end{WHILE}
\end{algorithm}

The key idea of the SMAF algorithm is that each node $i \in V$ has an independent distance estimate~$D_i^{(r)}$ for each rate $r \in R$ and we keep the minimum of these estimates as the node distance~$D_i$. At each round of the {\bf while} loop, the node with the minimum distance from~$Q$ is settled. Let this node be~$j$. For each incoming edge $(i,j) \in E$, we check for every rate $r \in R$ if the distance $D_i^{(r)}$ is larger than the distance $D_j$ of the node just settled. If that is the case, then node~$j$ is added to the forwarding set $F_i^{(r)}$ of that specific rate and distance $D_i^{(r)}$ is updated accordingly. If the new distance~$D_i^{(r)}$ is shorter than the node distance~$D_i$, we update the node distance~$D_i$ as well as the forwarding set~$F_i$ and transmission rate~$T_i$ to reflect the new minimum.

The running time of the Shortest Multirate Anypath First algorithm also depends on the implementation of~$Q$. The initialization in lines~1--10 takes $O(VR)$ time. Assuming that we have a Fibonacci heap, the {\sc Extract-Min} operations in line~12 take a total of $O(V \log V)$ aggregated time. We assume that the distance calculation of~$d_{iJ}^{\,(r)}$ and~$D_J^{(r)}$ in line~18 is optimized to take a constant time, as shown in Section~\ref{sec:single-rate}. As a result, the {\bf for} loop in lines 15--23 takes $O(ER)$ aggregated time. The total running time is therefore $O(V \log V +(E+V)R)$, which is $O(V \log V + ER)$ if all nodes are able to reach the destination. This is the same running time of the shortest single-path algorithm for multiple rates. Compared to the SAF algorithm, the SMAF algorithm allows nodes to take advantage of their multiple transmission rates at the cost of just a small increase in the running time.

In order to prove the optimality of the algorithm, we first introduce five lemmas that show a few properties of multirate anypath routing. We use~$\delta_i^{(r)}$ as the distance of the shortest multirate anypath from a node~$i$ to the destination~$d$, when~$i$ transmits at a fixed rate $r \in R$. Likewise, $\phi_i^{(r)}$ represents the corresponding forwarding set used in this multirate anypath. We use~$\delta_i$ without the indicated rate to represent the distance of the shortest multirate anypath from~$i$ to~$d$ via the optimal forwarding set~$\phi_i$ and optimal transmission rate $\rho \in R$. That~is, $\delta_i = \min_{r \in R} \delta_i^{(r)}$, $\rho = \argmin_{r \in R} \delta_i^{(r)}$, and $\phi_i = \phi_i^{(\rho)}$. We use $D_i$ as the distance of a particular multirate anypath from~$i$ to~$d$, but not necessarily the shortest one. 
The proof for each of these lemmas is available in Appendix~\ref{app:proofs}. 

\begin{lem}
\label{lem:comparison} 
{\it For a fixed transmission rate, let $D_i$ be the distance of a node~$i$ via forwarding set~$J$ and let $D_i'$ be the distance via forwarding set $J' = J \cup \{n\}$, where $D_n \geq D_j$ for every node $j \in J$. We have $D_i' \leq D_i$ if and only if $D_i \geq D_n$.}
\end{lem}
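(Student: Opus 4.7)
The plan is to use the incremental $(\alpha,\beta)$ representation from equations~(7)--(9) in the previous subsection. Writing $\alpha_J = 1 + \sum_{j \in J} p_j D_j \prod_{k<j}(1-p_k)$ and $\beta_J = \prod_{j \in J}(1-p_j)$, the anypath cost via $J$ can be expressed compactly as $D_i = \alpha_J/(1-\beta_J)$. The hypothesis $D_n \geq D_j$ for all $j \in J$ is crucial: since nodes are ordered by increasing distance, appending $n$ places it \emph{last} in the priority ordering of $J'$, so the update rules (8) apply verbatim, giving $\alpha_{J'} = \alpha_J + \beta_J p_n D_n$ and $\beta_{J'} = \beta_J(1-p_n)$. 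Hence
\begin{equation*}
D_i' \;=\; \frac{\alpha_J + \beta_J\, p_n D_n}{1 - \beta_J + \beta_J\, p_n}.
\end{equation*}

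Next I would establish the biconditional by a direct algebraic manipulation. Cross-multiplying $D_i' \leq D_i$ (both denominators are positive whenever $\beta_J < 1$, which holds as soon as $J$ contains any node with nonzero delivery probability) yields
\begin{equation*}
(1-\beta_J)\bigl(\alpha_J + \beta_J p_n D_n\bigr) \;\leq\; \alpha_J \bigl(1 - \beta_J + \beta_J p_n\bigr).
\end{equation*}
The term $\alpha_J(1-\beta_J)$ cancels from both sides, leaving $(1-\beta_J)\beta_J p_n D_n \leq \alpha_J \beta_J p_n$. Dividing by the positive quantity $\beta_J p_n$ reduces the condition to $(1-\beta_J)D_n \leq \alpha_J$, i.e., $D_n \leq \alpha_J/(1-\beta_J) = D_i$. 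Every step above is reversible, so the biconditional follows.

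The only mild obstacle is handling degenerate cases cleanly. If $p_n = 0$, then $n$ contributes no probability mass, $D_i' = D_i$ trivially, and the lemma holds vacuously in both directions. If $\beta_J = 0$ (some node in $J$ is perfectly reliable), the formulas still hold since $p_{iJ} = 1$ and all weight $w_j$ is carried by $j$'s with $D_j \leq D_n$, so again the algebra goes through. Apart from noting these edge cases, the argument is just the cross-multiplication above, which I expect to present in three or four lines.
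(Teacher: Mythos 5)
Your proof is correct and is essentially the paper's own argument in a different parametrization: where the paper writes $D_{J'}$ as a convex combination of the aggregated pair $(p_J, D_J)$ and $(p_n, D_n)$ and then runs a chain of inequalities forward and backward, you write both costs as the ratios $\alpha_J/(1-\beta_J)$ and $\alpha_{J'}/(1-\beta_{J'})$ and cross-multiply, which is the same algebra with the reversibility of each step made more transparent (and the constant $1$ in $\alpha_J$ becomes $s/r$ under EATT, which changes nothing). The one shared caveat is that both arguments implicitly need $p_n>0$ and $\beta_J>0$ (you divide by $\beta_J p_n$; the paper multiplies by $1-p_J/p_{J'}$, which vanishes in the same degenerate cases), and there the ``only if'' direction of the biconditional can genuinely fail rather than hold vacuously as you claim --- but that is a defect of the lemma's statement rather than of your proof.
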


We use Lemma~\ref{lem:comparison} for the comparisons in line~12 of the SAF algorithm and in line~17 of the SMAF algorithm. By this lemma, if the distance~$D_i$ via~$J$ is larger than the distance $D_n$ of a neighbor node~$n$, with $D_n \geq D_j$ for all $j \in J$, then the distance~$D_i'$ via $J'= J \cup \{n\}$ is always smaller than $D_i$. That is, it is always beneficial to include node~$n$ in the forwarding set in order to obtain a shorter distance to the destination.

\begin{lem}
\label{lem:acyclic} 
{\it The shortest distance~$\delta_i$ of a node~$i$ is always larger than or equal to the shortest distance $\delta_j$ of any node~$j$ in the optimal forwarding set $\phi_i$. That is, we have $\delta_i \geq \delta_j$ for all $j \in \phi_i$.}
\end{lem}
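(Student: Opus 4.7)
My plan is to argue by contradiction using a single convexity identity: that $\delta_i$ can be written as a strict convex combination of the anypath distance via a smaller forwarding set and the distance of the highest-$\delta$ neighbour in $\phi_i$. Since a strict convex combination lies strictly between its endpoints, this would force every neighbour's distance to be no larger than $\delta_i$, which is exactly the lemma.

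Concretely, suppose for contradiction that some $n\in\phi_i$ has $\delta_n > \delta_i$, and choose $n$ so as to maximise $\delta_n$ inside $\phi_i$; this makes $n$ the lowest-priority relay. The singleton case $\phi_i=\{n\}$ is immediate, since $\delta_i = d_{in}^{\,(\rho)} + \delta_n > \delta_n$ already contradicts the assumption, so one may assume $J := \phi_i\setminus\{n\}$ is nonempty. I would then establish the identity
\begin{equation*}
\delta_i \;=\; P\,D_i^{J,\rho} \;+\; (1-P)\,\delta_n, \qquad P := \frac{p_{iJ}^{(\rho)}}{p_{i\phi_i}^{(\rho)}} \in (0,1),
\end{equation*}
where $D_i^{J,\rho}$ denotes the anypath distance from $i$ with forwarding set $J$ at rate $\rho$. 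From this identity, $\delta_n > \delta_i$ forces $D_i^{J,\rho} < \delta_i$; but $(J,\rho)$ is a legitimate choice of forwarding set and rate, contradicting the optimality of $(\phi_i,\rho)$ implicit in the definition of $\delta_i$.

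The main technical step is the identity itself. It rests on the observation that, because $n$ has the lowest priority inside $\phi_i$, the partial products $\prod_{k<j}\bigl(1-p_k^{(\rho)}\bigr)$ appearing in $w_j^{(\rho)}$ are the same whether computed with respect to $\phi_i$ or with respect to $J$, for every $j\in J$. A short algebraic manipulation then shows that the weight of each $j\in J$ within $\phi_i$ equals $P$ times its weight within $J$, while the weight of $n$ within $\phi_i$ equals $1-P$; grouping the transmission-time contribution analogously via $(s/\rho)/p_{i\phi_i}^{(\rho)} = P\cdot(s/\rho)/p_{iJ}^{(\rho)}$ assembles the claimed convex combination. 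One might hope to shortcut the argument through Lemma~\ref{lem:comparison}, but that lemma only yields $D_i^{J,\rho}\geq\delta_n$, which falls short of the strict inequality $D_i^{J,\rho}<\delta_i$ needed for the contradiction; so I expect the explicit convex-combination identity to be unavoidable.
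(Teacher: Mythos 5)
Your proof is correct and rests on the same key identity as the paper's: the ``aggregated vertex'' decomposition $\delta_i = P\,D_i^{J,\rho} + (1-P)\,\delta_n$ with $P = p_{iJ}^{(\rho)}/p_{i\phi_i}^{(\rho)}$, which is exactly equation~(\ref{eq:aggregated}) together with $d_{iJ'} = \frac{p_J}{p_{J'}}d_{iJ}$, combined with the optimality of $\phi_i$. The paper packages the same algebra slightly differently --- it first invokes Lemma~\ref{lem:comparison} to obtain $D_i^{J,\rho} \geq \delta_n$ and then applies the identity in the forward direction to get $\delta_i \geq P\delta_n + (1-P)\delta_n = \delta_n$, whereas you run the identity in contrapositive form against optimality --- so the difference is cosmetic rather than substantive.
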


Lemma~\ref{lem:acyclic} guarantees that if a node~$i$ uses another node~$j$ in its optimal forwarding set~$\phi_i$, then distance~$\delta_i$ can never be smaller than~$\delta_j$. This is equivalent to the restriction that all weights in the graph must be nonnegative in Dijkstra's algorithm. 

\begin{lem}
\label{lem:extra} 
{\it For any transmission rate, if a node~$i$ uses a node~$n$ in its optimal forwarding set~$\phi_i$ and $\delta_i = \delta_n$, we can safely remove~$n$ from~$\phi_i$ without changing~$\delta_i$. The link $(i,n)$ is said to be ``redundant.''}
\end{lem}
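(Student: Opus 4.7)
The plan is a direct algebraic verification: I would write $\delta_i$ as a single rational expression in the delivery probabilities and the shortest distances of the members of $\phi_i$, and show that the hypothesis $\delta_i = \delta_n$ forces the expressions obtained with and without $n$ to coincide. Since $\delta_i$ is the minimum anypath cost by definition, producing an anypath of the same cost via $\phi_i \setminus \{n\}$ automatically proves optimality of the reduced set.

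First I would fix the transmission rate $r$ at which node $i$ operates. By Lemma~\ref{lem:acyclic}, every $j \in \phi_i$ satisfies $\delta_j \leq \delta_i$, so the hypothesis $\delta_n = \delta_i$ places $n$ at the tail of the priority ordering used in~(\ref{eq:w_j_def}); I may therefore assume $\phi_i = \{1, 2, \ldots, m, n\}$ with $\delta_1 \leq \cdots \leq \delta_m \leq \delta_n$. Combining the hyperlink cost $d_{iJ}^{\,(r)}$ with the weighted-average $D_J^{(r)}$, the distance via $\phi_i$ takes the single-fraction form
\[
\delta_i = \frac{c + \sum_{j=1}^{m} p_j^{(r)} \delta_j \prod_{k=1}^{j-1}\bigl(1-p_k^{(r)}\bigr) + p_n^{(r)} \delta_n \pi}{1 - \pi\bigl(1-p_n^{(r)}\bigr)},
\]
where $c$ is the rate-dependent numerator constant ($1$ for EATX, $s/r$ for EATT) and $\pi = \prod_{k=1}^{m}\bigl(1-p_k^{(r)}\bigr)$. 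Let $N'$ denote the numerator obtained when $n$ is dropped; then the displayed fraction reads $(N' + p_n^{(r)}\delta_n \pi)/(1-\pi+\pi p_n^{(r)})$, while the distance via $\phi_i \setminus \{n\}$ is simply $N'/(1-\pi)$.

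Substituting $\delta_i = \delta_n$ into the displayed equation, the cross-multiplied terms $\pi p_n^{(r)} \delta_n$ cancel from both sides, leaving $\delta_n (1-\pi) = N'$. Rearranging gives $N'/(1-\pi) = \delta_n = \delta_i$, which is exactly the distance achieved by the forwarding set $\phi_i \setminus \{n\}$ at rate $r$. Hence removing $n$ neither lengthens nor shortens the anypath cost, and $\phi_i \setminus \{n\}$ is itself an optimal forwarding set, proving the lemma.

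The only real obstacle is bookkeeping: one must justify that $n$ may be placed last in the priority ordering (so that $\pi$ factors cleanly out of the contributions of nodes $1,\ldots,m$) and then carefully track how appending $n$ perturbs both numerator and denominator. Once those structural observations are in place, the proof reduces to the one-line identity above, and no inequality argument or case analysis is required; the cancellation is a clean consequence of $n$ being a ``hitchhiker'' whose own shortest distance equals the cost of the anypath to which it is being attached.
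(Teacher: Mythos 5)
Your proposal is correct and follows essentially the same route as the paper's own proof: both use Lemma~\ref{lem:acyclic} to place $n$ last in the priority ordering and then verify algebraically that $\delta_i = \delta_n$ forces the cost via $\phi_i \setminus \{n\}$ to equal $\delta_n$, relying on the same structure of the EATX/EATT hyperlink cost. The only difference is notational --- the paper manipulates the ``aggregated vertex'' form of $D_{J'}$ step by step, while you expand everything into a single fraction and cancel --- so no further comment is needed.
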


By Lemma~\ref{lem:extra}, if the distances $\delta_i = \delta_n$ of two nodes~$i$ and~$n$ are the same, then the distance~$\delta_i$ via forwarding set~$\phi_i$ is the same as the distance via forwarding set~$\phi_i - \{n\}$. That is, the distance of node~$i$ does not change if it uses~$n$ in its forwarding set or not.

\begin{lem}
\label{lem:order} 
{\it If the shortest distances from the neighbors of a node~$i$ to a given destination are $\delta_1 \leq \delta_2 \leq \ldots \leq \delta_n$, then~$\phi_i^{(r)}$ is always of the form $\phi_i^{(r)} = \{1,2,\ldots,k\}$, for some $k \in \{1,2,\ldots,n\}$.}
\end{lem}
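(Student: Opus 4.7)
The plan is to argue by a greedy completion: starting from any optimal forwarding set $\phi^* = \phi_i^{(r)}$ at rate $r$, I iteratively insert the smallest missing index and show that each insertion preserves optimality, until the set is a prefix. Two ingredients are needed: an auxiliary distance bound on the nodes in an optimal set, and an algebraic estimate for the cost change under a single insertion. Throughout I write $D(J) = d_{iJ}^{\,(r)} + D_J^{(r)}$ for the cost of using forwarding set $J$ at rate $r$.

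The first step is to show that any optimal $\phi^*$ satisfies $\delta_j \leq D(\phi^*)$ for every $j \in \phi^*$. It suffices to check this for a $b^* \in \phi^*$ of largest $\delta$. If $\phi^* = \{b^*\}$ is a singleton, the claim follows directly from~(\ref{eq:d_iJ_ETT}). Otherwise, set $J = \phi^* \setminus \{b^*\}$ and apply Lemma~\ref{lem:comparison} with $n = b^*$: since $\phi^*$ is optimal we have $D(\phi^*) \leq D(J)$, and Lemma~\ref{lem:comparison} then gives $D(J) \geq \delta_{b^*}$. Expanding through~(\ref{eq:d_iJ_ETT}) and~(\ref{eq:D_J_r_def}) writes $D(\phi^*) = D(J \cup \{b^*\})$ as a convex combination $\alpha D(J) + (1-\alpha)\delta_{b^*}$ with $\alpha = P_J^{(r)}/[P_J^{(r)} + p_{b^*}^{(r)}(1 - P_J^{(r)})] \in [0,1]$, so $D(\phi^*) \geq \delta_{b^*}$, and hence $\delta_j \leq \delta_{b^*} \leq D(\phi^*)$ for all $j \in \phi^*$.

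Next, assume $\phi^*$ is not of the form $\{1,\ldots,k\}$ and let $a$ be the smallest index missing from it; some $b \in \phi^*$ has $b > a$, so the previous step gives $\delta_a \leq \delta_b \leq D(\phi^*)$. Enumerate $\phi^*$ in sorted order as $j_1, \ldots, j_m$, set $q_k = 1 - p_{j_k}^{(r)}$ and $\Pi_1^s = \prod_{k=1}^{s} q_k$, and let $t$ be the number of elements of $\phi^*$ with $\delta < \delta_a$, so $a$ occupies position $t+1$ in $\phi' := \phi^* \cup \{a\}$. A direct expansion of the numerators and denominators in~(\ref{eq:D_J_r_def}) gives
\begin{equation*}
D(\phi^*) - D(\phi') \;=\; \frac{p_a^{(r)}}{P_{\phi'}^{(r)}}\,\bigl[\,D(\phi^*)\,\Pi_1^m \;+\; S_2 \;-\; \delta_a\,\Pi_1^t\,\bigr],
\end{equation*}
where $S_2 = \sum_{k=t+1}^{m} \delta_{j_k}(1-q_k)\Pi_1^{k-1}$ collects the contributions of the nodes placed after $a$. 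Since $\delta_{j_k} \geq \delta_a$ for $k > t$ and $\sum_{k=t+1}^{m}(1-q_k)\Pi_1^{k-1} = \Pi_1^t - \Pi_1^m$ telescopes, we obtain $S_2 \geq \delta_a(\Pi_1^t - \Pi_1^m)$, and the bracket is at least $(D(\phi^*) - \delta_a)\,\Pi_1^m \geq 0$. Hence $D(\phi') \leq D(\phi^*)$; by optimality equality must hold, so $\phi'$ is another optimum that still satisfies the distance bound of the previous step. Iterating this at most $\max\phi^* - |\phi^*|$ times produces the prefix $\{1,2,\ldots,\max\phi^*\}$, which is optimal and of the required form.

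The main obstacle I anticipate is the algebraic identity above: because $a$ is inserted at the interior position $t+1$ rather than at the end, the factor $(1-p_a^{(r)})$ attaches to the contribution of every later node while the normalizer $P^{(r)}$ also changes, and only the specific telescoping identity for $\sum_k (1-q_k)\Pi_1^{k-1}$ pins the resulting estimate to the clean sufficient condition $D(\phi^*) \geq \delta_a$ that the first step already supplies.
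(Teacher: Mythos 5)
Your proof is correct, but it takes a genuinely different route from the paper's. The paper argues by reverse induction on contiguous blocks: it shows that prepending the nearest remaining node to a set of the form $\{j,j+1,\ldots,k\}$ is an \emph{unconditional} improvement, because the remaining cost is a weighted average and prepending the closest node only shifts weight toward a smaller distance while also decreasing the hyperlink cost ($d_{iJ'}\leq d_{iJ}$ for $J\subseteq J'$); no optimality hypothesis on the starting set is needed. You instead start from an arbitrary optimal set $\phi^*$, first establish $\delta_j\leq D(\phi^*)$ for all $j\in\phi^*$ by combining Lemma~\ref{lem:comparison} with the convex-combination identity behind~(\ref{eq:aggregated}), and then fill interior gaps one index at a time via an explicit expansion of the cost change under a \emph{mid-position} insertion, where the telescoping identity $\sum_{k=t+1}^{m}(1-q_k)\Pi_1^{k-1}=\Pi_1^t-\Pi_1^m$ reduces the sign question to the sufficient condition $D(\phi^*)\geq\delta_a$. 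I checked your insertion identity and the resulting bound $D(\phi^*)-D(\phi')\geq \frac{p_a^{(r)}}{P_{\phi'}^{(r)}}(D(\phi^*)-\delta_a)\Pi_1^m\geq 0$; both are right. What each approach buys: the paper's prepend-the-closest-node argument is shorter and needs no case analysis, but as written it only compares contiguous suffix blocks $\{j,\ldots,k\}$ against one another and does not explicitly dominate forwarding sets with \emph{interior} gaps such as $\{1,3\}$ (where inserting the missing node shifts weight both up and down in the average, so the unconditional argument does not apply); your interior-insertion computation closes exactly that case, at the price of heavier algebra and of needing the optimality of $\phi^*$ to invoke the bound $\delta_a\leq D(\phi^*)$. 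One cosmetic caveat: your argument produces \emph{an} optimal prefix set (equal in cost to $\phi^*$) rather than showing every optimal set is a prefix, but that is the form of the statement actually used by the SMAF correctness proof, and ties are already handled by Lemma~\ref{lem:extra}.
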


According to Lemma~\ref{lem:order}, the best forwarding set~$\phi_i^{(r)}$ for transmission rate $r \in R$ is a subset of neighbors with the shortest distances to the destination. That is, given a set of neighbors with distances $\delta_1 \leq \delta_2 \leq \ldots \leq \delta_n$, the best forwarding set~$\phi_i^{(r)}$ when using rate $r \in R$ is always one of $\{1\}$, $\{1,2\}$, $\{1,2,3\},\ldots,\{1,2,\ldots,n\}$. As a result, forwarding sets with gaps between the neighbors, such as $\{2,3\}$ or $\{1,4\}$, can never yield the shortest distance to the destination. This property is the key factor that allows us to reduce the complexity of the proposed algorithms from exponential to polynomial time. For $n$ neighbors, we do not have to test every one of the $2^n-1$ possible forwarding sets. Instead, we only need to check at most $n$ forwarding sets.

\begin{lem} 
\label{lem:fset} 
{\it For a given transmission rate $r \in R$, assume that $\phi_i^{(r)} = \{1,2,\ldots,n\}$ with distances $\delta_1 \leq \delta_2 \leq \ldots \leq \delta_n$. If $D_i^j$ is the distance from node~$i$ using transmission rate~$r$ via forwarding set $\{1,2,\ldots,j\}$, for $1 \leq j \leq n$, then we always have $D_i^1 \geq D_i^2 \geq \ldots \geq D_i^n = \delta_i^{(r)}$.}
\end{lem}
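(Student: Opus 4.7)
The plan is to proceed by contradiction, exploiting the optimality of $\phi_i^{(r)} = \{1,2,\ldots,n\}$ (given by hypothesis) together with a convex-combination identity for $D_i^j$ that I would derive directly from the definitions of $d_{iJ}^{(r)}$ and $D_J^{(r)}$.

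First I would establish the following key recursion: if $\beta_j \equiv \prod_{k=1}^{j}(1-p_k^{(r)})$, then plugging the definitions of the hyperlink cost and the remaining-anypath cost into $D_i^j = d_{iJ^j}^{(r)} + D_{J^j}^{(r)}$ and comparing $D_i^j$ with $D_i^{j-1}$ yields
\begin{equation*}
D_i^j \;=\; \lambda_j\, D_i^{j-1} \,+\, (1-\lambda_j)\,\delta_j, \qquad \lambda_j \,=\, \frac{1-\beta_{j-1}}{1-\beta_j} \in (0,1).
\end{equation*}
This identity says $D_i^j$ is a strict convex combination of $D_i^{j-1}$ and the distance $\delta_j$ of the newly added node, so $D_i^j$ always lies strictly between $D_i^{j-1}$ and $\delta_j$. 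In particular, $D_i^j \le D_i^{j-1}$ if and only if $\delta_j \le D_i^{j-1}$, which is exactly Lemma~\ref{lem:comparison}, and moreover whenever $D_i^{j-1} < \delta_j$ we have the sharper statement $D_i^{j-1} < D_i^j < \delta_j$.

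Now suppose, for contradiction, that monotonicity fails at some smallest index $j \le n$, i.e.\ $D_i^{j-1} < D_i^j$. By the identity this forces $D_i^{j-1} < \delta_j$, and hence $D_i^j < \delta_j$. I would then propagate this by induction on $k \ge j$: assuming $D_i^{k-1} < \delta_k$, the identity gives $D_i^{k-1} < D_i^k < \delta_k$, and because the $\delta_k$'s are sorted in non-decreasing order we get $D_i^k < \delta_k \le \delta_{k+1}$, which feeds the induction. Chaining these inequalities from $j$ up to $n$ produces $D_i^{j-1} < D_i^j < \cdots < D_i^n$. But $\phi_i^{(r)} = \{1,\ldots,n\}$ is by definition an optimal forwarding set for rate~$r$, so $D_i^n = \delta_i^{(r)} \le D_i^{j-1}$, a contradiction. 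This forces $D_i^{j-1} \ge D_i^j$ for every $j$, and the final equality $D_i^n = \delta_i^{(r)}$ holds by the definition of $\phi_i^{(r)}$.

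The main obstacle I anticipate is justifying the convex-combination identity cleanly; once that is in hand, the rest is a short induction. Writing the weights $w_j^{(r)}$ and the factor $1/p_{iJ}^{(r)}$ in terms of the telescoping quantities $\beta_{j-1}-\beta_j$ is what makes the algebra collapse to a single $\lambda_j$, so the care lies in performing this bookkeeping rather than repeatedly expanding weighted averages. Lemma~\ref{lem:comparison} already contains the ``sign'' half of this identity, so I would expect its proof to carry essentially the same calculation and could simply be invoked if the paper records it in sufficient form.
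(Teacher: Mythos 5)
Your proof is correct, but it is organized differently from the paper's. The paper argues directly, descending from the full set: it first establishes $\delta_{n-1} \le \delta_n \le \delta_i = D_i^n \le D_i^{n-1}$ (the middle inequality by Lemma~\ref{lem:acyclic}, the last by optimality of $\phi_i^{(r)}$), and then for each $k$ observes that $\delta_k \le \delta_i^{(r)} = D_i^n \le D_i^{k-1}$ --- again by optimality of the full set --- so that Lemma~\ref{lem:comparison} immediately yields $D_i^k \le D_i^{k-1}$; repeating this gives the whole chain. You instead prove a quantitative sharpening of Lemma~\ref{lem:comparison}, namely the convex-combination recursion $D_i^j = \lambda_j D_i^{j-1} + (1-\lambda_j)\delta_j$ with $\lambda_j = (1-\beta_{j-1})/(1-\beta_j)$, and run a contradiction in the ascending direction: a first failure of monotonicity forces $D_i^{j-1} < \delta_j$, and the sortedness of the $\delta$'s propagates this into a strictly increasing chain ending above $D_i^n$, contradicting $D_i^n = \delta_i^{(r)} \le D_i^{j-1}$. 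Your identity is valid (it is implicit in the paper's equation~(\ref{eq:aggregated}) and in the $\alpha_i,\beta_i$ update of Section~\ref{sec:single-rate}), and it buys you independence from Lemma~\ref{lem:acyclic}, which the paper needs for the step $\delta_n \le \delta_i$; the paper's version is shorter because it reuses Lemma~\ref{lem:comparison} as a black box and never needs strict inequalities. One small caveat: your claim that $\lambda_j$ lies strictly in $(0,1)$ requires $p_k^{(r)} > 0$ for every node in the set, which is implicit in the model (a node with zero delivery probability is not a neighbor) but worth stating explicitly, since your contradiction relies on the chain being strictly increasing.
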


Lemma~\ref{lem:fset} explains another important property necessary for the SMAF algorithm to converge. Assuming now that the best forwarding set $\phi_i^{(r)}$ for transmission rate $r \in R$ is defined as $\phi_i^{(r)} = \{1,2,\ldots,n\}$ with distances $\delta_1 \leq \delta_2 \leq \ldots \leq \delta_n$, the distance $D_i$ monotonically decreases as we use each of the forwarding sets $\{1\},\{1,2\},\{1,2,3\},\ldots,\{1,2,\ldots,j\}$.

We now present the proof of optimality of the algorithm.
\begin{thm}
\label{thm:saf_correctness}
{\it
Optimality of the algorithm.

Let $G = (V,E)$ be a weighted, directed, graph and let $d$ be the destination. After running the Shortest Multirate Anypath First algorithm on $G$, we have $D_i = \delta_i$ for all nodes $i \in V$.}
\end{thm}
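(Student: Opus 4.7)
The proof is essentially a generalization of the correctness proof for Dijkstra's algorithm, carried out by induction on the order in which nodes are extracted from $Q$. My plan is to show the stronger claim: if $v_0,v_1,v_2,\ldots$ is any ordering of $V$ with $\delta_{v_0}\le\delta_{v_1}\le\cdots$, then \textsc{Extract-Min} pulls the nodes in this order, and at the moment $v_n$ is extracted we have $D_{v_n}=\delta_{v_n}$. The base case is immediate: $D_d=0$ is the unique minimum at the first iteration, and $\delta_d=0$.

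For the inductive step, assume the claim holds through $v_0,\ldots,v_{n-1}$ and let $j$ be the next node extracted from $Q$. First, I would note the trivial upper-bound direction $D_u\ge\delta_u$ for every node $u$ at every time, which follows from the fact that the value stored in $D_u^{(r)}$ is always realized by the actual anypath encoded by $F_u^{(r)}$ using rate $r$. The real work is to prove $D_{v_n}\le\delta_{v_n}$ by the time $v_n$ is about to be extracted. Let $\rho$ be the optimal rate at $v_n$ and $\phi_{v_n}^{(\rho)}$ its optimal forwarding set. By Lemma~\ref{lem:acyclic} every $m\in\phi_{v_n}^{(\rho)}$ satisfies $\delta_m\le\delta_{v_n}$, and by Lemma~\ref{lem:extra} I may remove from $\phi_{v_n}^{(\rho)}$ any node with $\delta_m=\delta_{v_n}$ without changing the distance, so WLOG every node in the (reduced) optimal set has $\delta_m<\delta_{v_n}$ and hence already lies in $S$ with $D_m=\delta_m$ by the induction hypothesis.

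The delicate step, which I expect to be the main obstacle, is Step~(d): showing that when the nodes of $\phi_{v_n}^{(\rho)}$ were successively settled, the assignments executed by the algorithm actually drove $D_{v_n}^{(\rho)}$ down to $\delta_{v_n}^{(\rho)}$. By Lemma~\ref{lem:order} the (reduced) optimal set is a prefix $\{n_1,\ldots,n_k\}$ of the $\delta$-sorted neighbours of $v_n$, and by the induction hypothesis those neighbours were settled exactly in the order $n_1,n_2,\ldots,n_k$. When $n_\ell$ is settled the algorithm checks $D_{v_n}^{(\rho)}>D_{n_\ell}=\delta_{n_\ell}$; Lemma~\ref{lem:comparison} tells me that adding $n_\ell$ to the current $F_{v_n}^{(\rho)}$ strictly improves $D_{v_n}^{(\rho)}$ exactly under that condition, while Lemma~\ref{lem:fset} guarantees that adding the full prefix yields a monotonically decreasing sequence terminating at $\delta_{v_n}^{(\rho)}$. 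I would argue by induction on $\ell$ that after the processing of $n_\ell$ the quantity $D_{v_n}^{(\rho)}$ equals the Lemma~\ref{lem:fset}-value obtained by adding exactly those $n_1,\ldots,n_\ell$ that passed the strict inequality test, and that skipped nodes (where equality held) are precisely the redundant ones identified by Lemma~\ref{lem:extra}, so the final $D_{v_n}^{(\rho)}$ still equals $\delta_{v_n}^{(\rho)}=\delta_{v_n}$.

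Combining, we obtain $D_{v_n}\le D_{v_n}^{(\rho)}\le\delta_{v_n}$, and together with $D_{v_n}\ge\delta_{v_n}$ this yields $D_{v_n}=\delta_{v_n}$. Finally I must verify that $v_n$ really is the node \textsc{Extract-Min} selects: any other $u\in Q$ satisfies $D_u\ge\delta_u\ge\delta_{v_n}=D_{v_n}$ because the nodes with strictly smaller $\delta$ are already in $S$ by the induction hypothesis, so $v_n$ minimises $D$ over $Q$ (with ties broken by the choice of ordering $v_0,v_1,\ldots$). This closes the induction, and since every node is eventually extracted, $D_i=\delta_i$ holds for all $i\in V$ when the algorithm terminates.
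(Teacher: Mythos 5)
Your proposal is correct and follows essentially the same route as the paper: the paper phrases the argument as a minimal-counterexample contradiction (the first node settled with $D_s \neq \delta_s$, plus a nested contradiction for the settling order), whereas you run the same argument as a direct induction on the extraction order, but the lemmas play exactly the same roles --- Lemmas~\ref{lem:acyclic} and~\ref{lem:extra} to reduce the optimal forwarding set to already-settled nodes with correct distances, and Lemmas~\ref{lem:order}, \ref{lem:fset}, and~\ref{lem:comparison} to show the successive relaxations drive $D_i^{(\rho)}$ down to $\delta_i^{(\rho)}$. Your explicit handling of the tie/equality cases in the relaxation test is, if anything, slightly more careful than the paper's.
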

\begin{proof}
This proof is similar to the proof of Dijkstra's algorithm~\cite{cormen01}. We show that for each node $s \in V$, we have $D_s = \delta_s$ at the time $s$ is added to $S$. 

For the purpose of contradiction, let $s$ be the first node added to $S$ for which $D_s \neq \delta_s$. We must have $s \neq d$ because $d$ is the first node added to $S$ and $D_d = \delta_d = 0$ at that time. Just before adding $s$ to $S$, we also have that $S$ is not empty, since $s \neq d$ and $S$ must contain at least $d$. We assume that there must be a multirate anypath from $s$ to $d$, otherwise $D_s = \delta_s = \infty$, which contradicts our initial assumption that $D_s \neq \delta_s$. If there is at least one multirate anypath, there is a shortest multirate anypath~$\alpha$ from $s$ to $d$. Let us consider a cut $(V - S, S)$ of~$\alpha$, such that we have $s \in V - S$ and $d \in S$, as shown in Figure~\ref{fig:saf_proof}. Let the set~$J$ be composed of nodes in $V - S$ that have an outgoing link to a node in~$S$. 
Likewise, let the set~$K$ be composed of nodes in~$S$ that have an incoming link from a node in~$V - S$. 

\begin{figure}[h!]
\centering
\includegraphics[width=.40\textwidth]{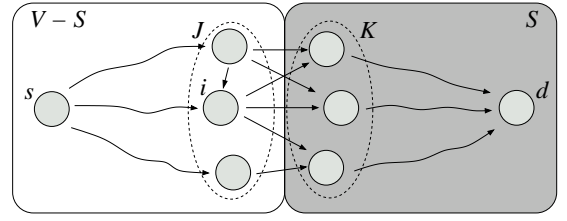}
\caption{The shortest multirate anypath~$\alpha$ from~$s$ to~$d$. Set~$S$ must be nonempty before node~$s$ is inserted into it, since it must contain at least~$d$. We consider a cut $(V - S, S)$ of $\alpha$, such that we have $s \in V - S$ and $d \in S$. Nodes~$s$ and~$d$ are distinct but we may have no hyperlinks between $s$ and $J$, such that $J = \{s\}$, and also between $K$ and $d$, such that $K = \{d\}$.}
\label{fig:saf_proof}
\end{figure}

\begin{figure*}[ht!]
\centering
\subfigure[] {
  \label{fig:testbed-nodes}
  \includegraphics[width=.45\textwidth]{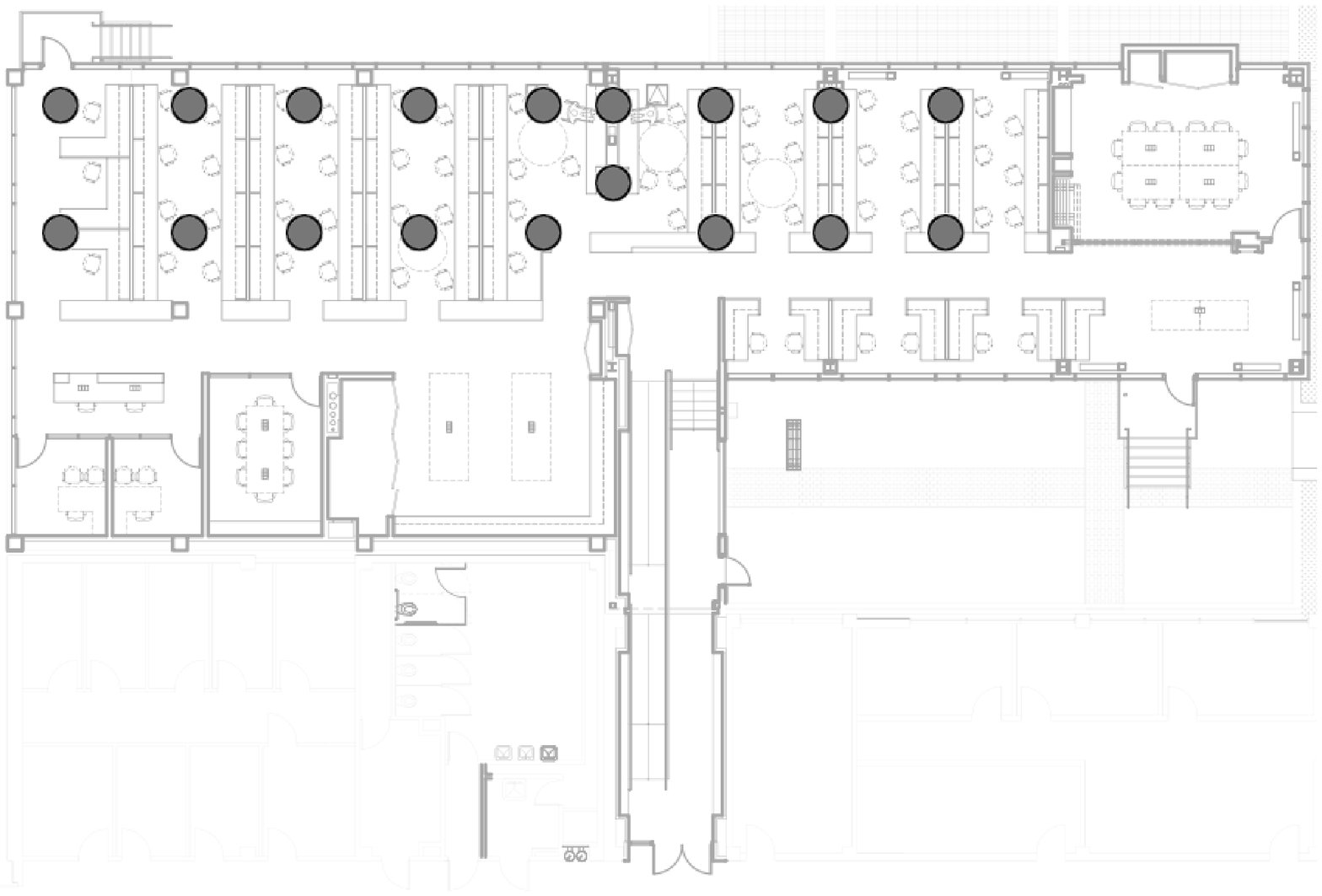}
}
\subfigure[] {
  \label{fig:testbed-delivery}
  \includegraphics[width=.45\textwidth]{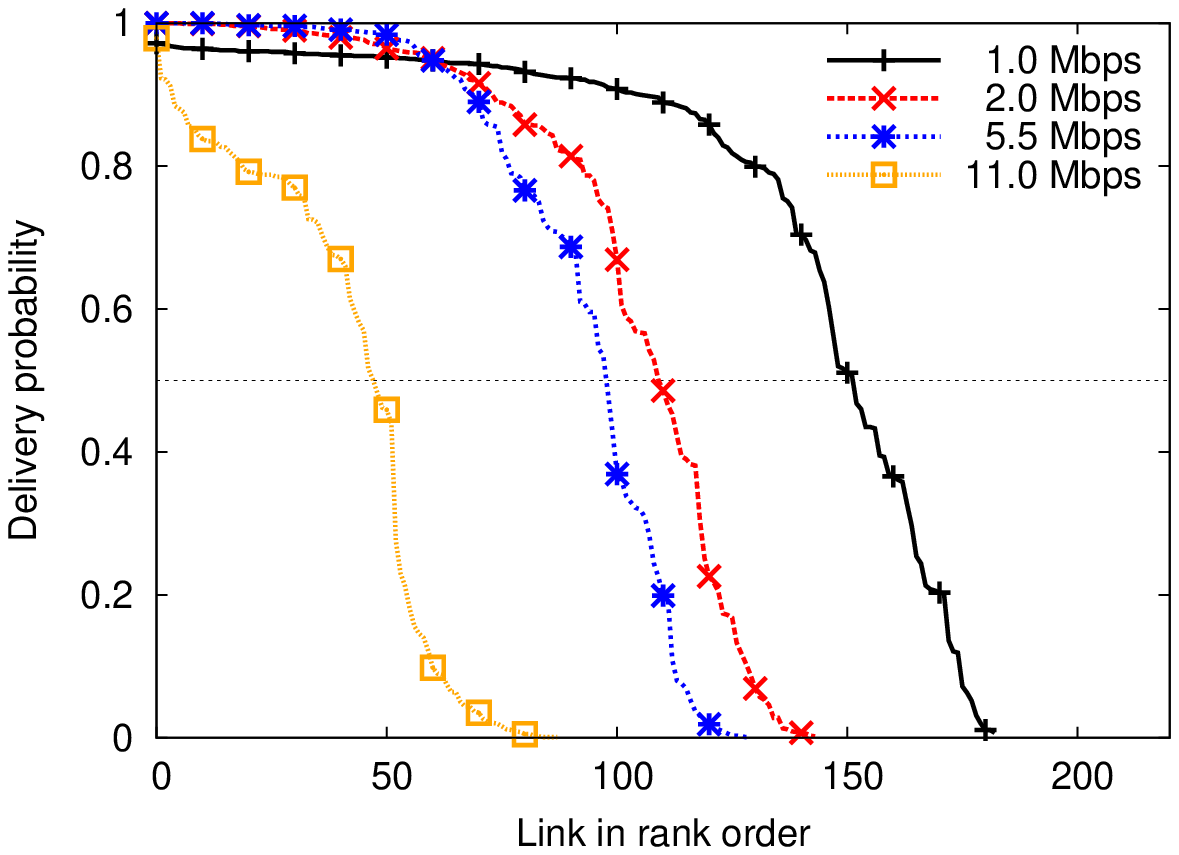}
}
\caption{(a) The location of the nodes in the testbed, arranged in an approximate 2x9 grid. (b) The delivery probabilities of the testbed links for each transmission rate. The data points for each curve are placed in order from largest to smallest (i.e., in rank order). As the rate increases, less links are available and thus path diversity decreases.}
\label{fig:testbed}
\end{figure*}

Without loss of generality, assume that node $i \in J$ has the shortest distance to~$d$ among all nodes in~$V - S$. That is, $\delta_i \leq \delta_j$ for all $j \in V - S$. We claim that every edge leaving node~$i$ must necessarily cross the cut $(V - S, S)$. Thus, for every edge~$(i,j)$ leaving node~$i$, we must have $j \in S$. To prove this claim, let us assume that node~$i$ has an edge $(i,j)$ to another node $j \in V - S$. By Lemma~\ref{lem:acyclic}, we know that in this case we must have $\delta_i \geq \delta_j$. However, since we assumed that node~$i$ has the shortest distance in $V-S$, then $\delta_i \leq \delta_j$ and such an edge could only exist if $\delta_i = \delta_j$. By Lemma~\ref{lem:extra}, we know that if $\delta_i = \delta_j$ then the link $(i,j)$ is redundant and we can safely remove it from the multirate anypath without changing its distance. As a result, for every edge $(i,j)$ we must have $j \in S$. Figure~\ref{fig:saf_proof} shows this situation where node~$i$ only has links to nodes in~$S$.

Additionally, we claim that the nodes in~$S$ were settled in ascending order of distance. That is, if $\delta_j < \delta_k$ then node~$j$ was settled before node~$k$. Since node~$i$ has the shortest distance to~$d$ among all nodes in $V-S$, settling $s$ before~$i$ implies that~$s$ is settled ``out of order.'' For the purpose of contradiction, let~$s$ be the first node settled out of order. This is an assumption which is independent from the initial assumption that $D_s \not = \delta_s$.

We now claim that $D_i = \delta_i$ at the time $s$ is inserted into~$S$. To prove this claim, notice that $K \subseteq S$. Since $s$ is the first node for which $D_s \neq \delta_s$ when it is added to $S$, then we must have $D_k = \delta_k$, for every $k \in K$. Let $\phi_i \subseteq K$ be the forwarding set used in the shortest multirate anypath from~$i$ to~$d$ using the optimal transmission rate $\rho \in R$. 
By Lemma~\ref{lem:order}, $\phi_i$~is composed of the neighbors of~$i$ with the shortest distances to~$d$. Assume that $\phi_i = \{1,2,\ldots,j\}$ with $\delta_1 \leq \delta_2 \leq \ldots \leq \delta_j$. Since $s$ is the first out-of-order node, we know that the nodes in~$S$ were settled in order. Therefore, node~$1$ was settled before node~$2$, which was settled before node~$3$, and so on. At the time node~$1$ is settled, the forwarding set $F_i^{\,(\rho)}$ is initialized to $F_i^{(\rho)}~=~\{1\}$. When node~$2$ is settled, there is no need to check the forwarding set~$\{2\}$. By Lemma~\ref{lem:order}, this forwarding set is never optimal so we just check the set~$\{1,2\}$. By Lemma~\ref{lem:fset}, using $\{1,2\}$ always provides a shorter distance than using just~$\{1\}$. The forwarding set is then updated to~$F_i^{\,(\rho)} = \{1,2\}$. The same procedure is repeated for each settled node, until we finally have $F_i^{\,(\rho)} = \phi_i = \{1,2,\ldots,j\}$. At this time, we also have $D_i^{(\rho)} = \delta_i$, which triggers the update $D_i = D_i^{(\rho)} = \delta_i$, $F_i = F_i^{\,(\rho)} = \phi_i$, and $T_i = \rho$. Once $D_i$ is equal to the shortest distance~$\delta_i$, it does not change anymore and we have $D_i = \delta_i$ at the time $s$ is inserted into $S$.

We can now prove the theorem with two contradictions. Since node~$i$ occurs after node~$s$ in the shortest multirate anypath to~$d$, by Lemma~\ref{lem:acyclic} we have $\delta_i \leq \delta_s$. In addition, we must also have $\delta_s \leq D_s$ because $D_s$ is never smaller than~$\delta_s$. Since both $i$ and $s$ are in $V - S$ and node~$s$ was chosen as the one with the minimum distance from~$Q$, then we must have $D_s \leq D_i$ and $\delta_i \leq \delta_s \leq D_s \leq D_i$. From our previous claim, we know that $D_i = \delta_i$ and therefore $D_i = \delta_i \leq \delta_s \leq D_s \leq D_i$, from which we have
\begin{equation}
D_i = \delta_i = \delta_s = D_s. 
\end{equation}
As a result, $s$ is not settled out of order since~$i$ has the shortest distance in~$V-S$ and $\delta_s = \delta_i$. From this we conclude that the nodes in~$S$ are settled in ascending order of distance. Additionally, we also have $D_s = \delta_s$ at the time $s$ is added to~$S$, which contradicts our initial choice of~$s$. We conclude therefore that for each node~$s \in V$ we have $D_s = \delta_s$ at the time $s$ is added to~$S$.
\end{proof}

\section{Experimental Results}
\label{sec:results}

We evaluated the proposed multirate algorithm using an 18-node 802.11b indoor testbed. Each node is a Stargate microserver~\cite{stargate} equipped with an Intel 400-MHz Xscale PXA255 processor, 64 MB of SDRAM, 32 MB of Flash, and an SMC EliteConnect SMC2532W-B PCMCIA 802.11b wireless network card using the Prism2 chipset. This card has a maximum transmission power of 200~mW and it defaults to a proprietary power control algorithm. The nodes of the testbed are distributed over the ceiling of the Center for Embedded Networked Sensing (CENS) at UCLA. The nodes are located in an approximate 2x9 grid and roughly ten meters apart from each other. Figure~\ref{fig:testbed-nodes} depicts the location of the nodes in the testbed. Each node is equipped with a 3-dB omni-directional rubber duck antenna for the wireless communication. In order to emulate a wireless mesh network with multiple hops, we use a 30-dB SA3-XX attenuator between the wireless interface and its antenna. The attenuator weakens the signal during both the transmission and the reception of a frame, emulating a large distance between nodes. For 11~Mbps, we have paths of up to 8 hops between each pair of nodes, with 3.1 hops on average. For 1~Mbps, we have a longer transmission range, which reduces the maximum path length to 3 hops, with an average of 1.5 hops between each pair of nodes.

We use the testbed to measure the delivery probability of each link at different transmission rates. For that purpose, each node broadcasts one thousand 1500-byte packets and later on we collect the number of received packets at neighbor nodes. We repeat this process for 1, 2, 5.5, and 11~Mbps to have a link estimate for each transmission rate. We use the Click toolkit~\cite{kohler00} and a modified version of the MORE software package~\cite{chachulski07a} for the data collection. Our implementation is capable of sending and receiving raw 802.11 frames by using the wireless network interface in monitor mode. We modified the HostAP Prism driver~\cite{hostap} for Linux in order to allow not only 802.11 frame overhearing but also frame injection while in monitor mode. In addition, we extended the HostAP driver to enable it to control the transmission rate of each 802.11 frame sent. The Click toolkit tags each frame with a selected transmission rate and this information is then passed along to the driver. For each frame, our modification reads the information tagged by Click and notifies the wireless interface firmware about the specified transmission rate. 

Figure~\ref{fig:testbed-delivery} shows the distribution of the delivery probability of each link in the testbed at different 802.11b transmission rates. Every node pair contributes with two links in the graph, one for each direction. Links of each rate are placed in order from largest to smallest (i.e., in rank order). The points of each curve are sorted separately and, therefore, the delivery probabilities of a given x-value are not necessarily from the same link. In wireless mesh networks, higher transmission rates usually have shorter radio ranges and therefore a lower network density. We can see this behavior in Figure~\ref{fig:testbed-delivery}. As the transmission rate increases, we can see that we have less links available and therefore less path diversity between nodes. For instance, as shown by the dashed horizontal line, the number of links with a delivery probability higher than 50\% is 151 at 1~Mbps, 109 at 2~Mbps, 95 at 5.5~Mbps, and only 47 at 11~Mbps. With less paths available at higher rates, we have an interesting tradeoff for multirate anypath routing. With a lower transmission rate, we have more path diversity and a shorter number of hops to traverse, but also a lower throughput. On the other hand, a higher rate results in a higher throughput, but also in less path diversity and a larger number of hops. Our algorithm explores this tradeoff and selects the optimal transmission rate and forwarding set for every node.

Fig.~\ref{fig:independence} shows the results of an experiment we conducted to test the independence of receivers. In our experiment, a node broadcasts 500,000 data frames at 11~Mbps to four neighbors and each frame has 1500 bytes. The x-axis represents the 16 possible set of receivers for the frame (i.e., set 0 corresponds to the frame being lost by all neighbors and set 15 corresponds to every neighbor correctly receiving the frame). The y-axis represents the probability of each set. The ``observed'' histogram is directly derived from the data. The ``independent'' histogram is derived by assuming that the loss probability at each receiver is independent of each other, so it is calculated simply by multiplying the respective probabilities of each individual receiver. We can see that both functions are pretty close indicating that the delivery probabilities of each receiver are loosely correlated. This experiment was repeated for other nodes in the testbed and a similar behavior was observed. Our result are also consistent with other studies~\cite{miu05,reis06}. 

\begin{figure}[ht!]
\centering
\includegraphics[width=.45\textwidth]{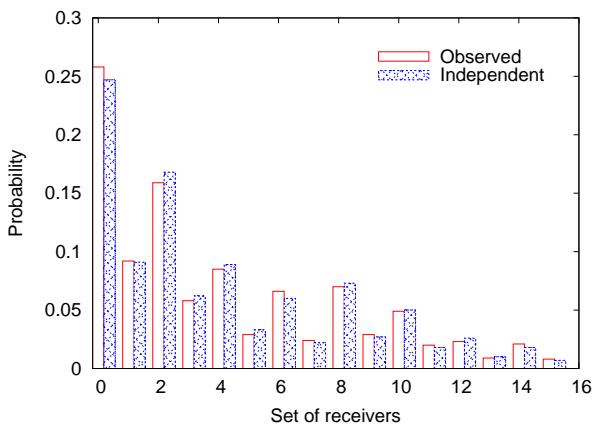}
\caption{(a) Distribution of frame receptions at four neighbors. For four neighbors, we have $2^4 = 16$ subsets and each one represents a different set of neighbors who correctly received the frame. }
\label{fig:independence}
\end{figure}

\begin{figure*}[ht!]
\centering
\subfigure[] {
  \includegraphics[width=.45\textwidth]{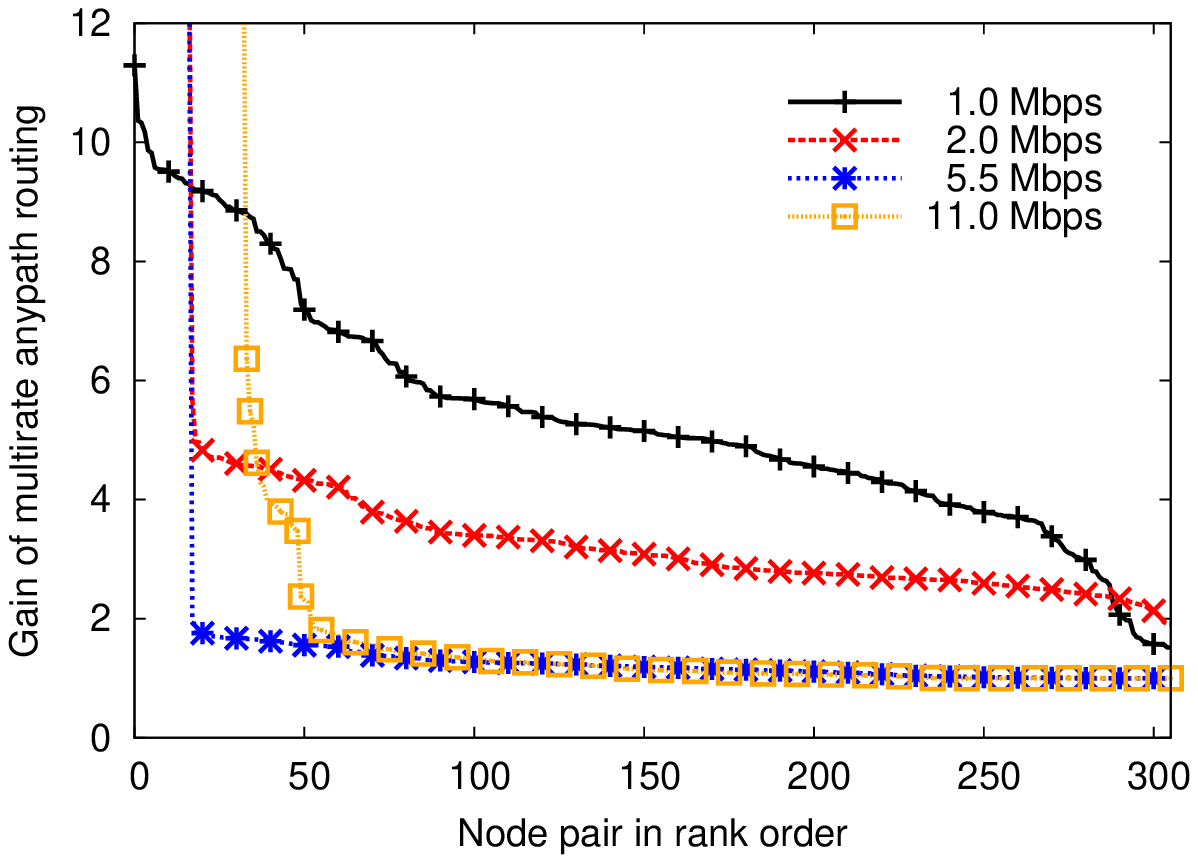}
  \label{fig:gain_ap}
}
\subfigure[] {
  \includegraphics[width=.45\textwidth]{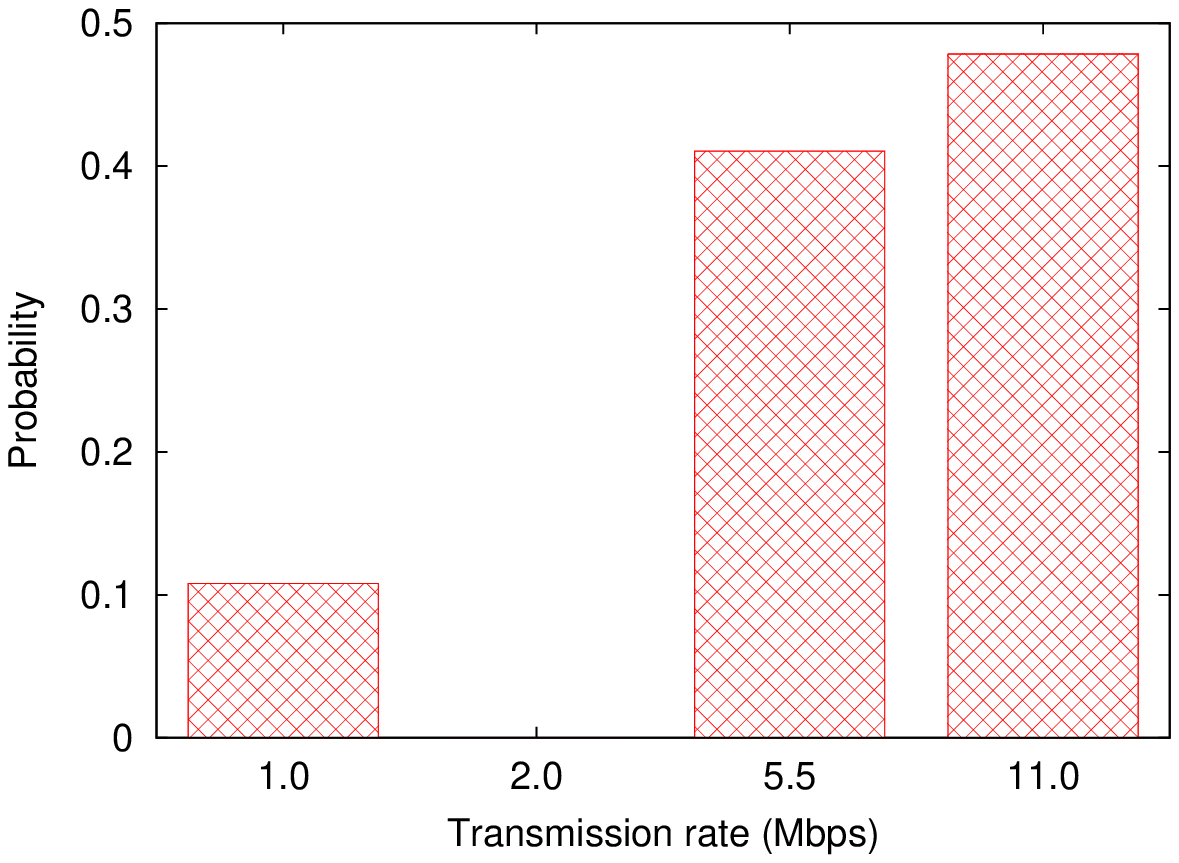}
  \label{fig:rates_pdf}
}
\caption{Results of the SMAF algorithm for the wireless testbed. (a) Gain of multirate over single-rate anypath routing. For each node pair, we indicate in the y-axis how many times multirate anypath routing is better than single-rate anypath. (b) Histogram of the transmission rate chosen by each node. Optimal transmission rates are not concentrated at any particular rate, indicating that a single-rate algorithm can not perform as well as a multirate algorithm.}
\label{fig:results}
\end{figure*}

The shortest multirate anypath {\it always} has an equal or lower cost than the shortest single-rate anypath. Otherwise, we would have a contradiction since we can find another multirate anypath (i.e., the single-rate anypath) with a shorter distance to the destination. It is important, however, to quantify how much better multirate anypath routing is over single-rate anypath. For this purpose, we calculate the gain of multirate over single-rate anypath. We define the gain of a given pair of nodes as the ratio between the single-rate anypath distance and the multirate anypath distance between these two nodes. This metric reflects how many times the end-to-end transmission time is larger when using single-rate as opposed to multirate anypath routing. 

Figure~\ref{fig:gain_ap} shows the distribution of this gain for every pair of nodes in the network. Each curve represents the gain over single-rate anypath routing at a fixed rate. We see that the end-to-end transmission time with multirate anypath routing is at least 50\% and up to 11.3 times shorter than with single-rate anypath routing at 1~Mbps, with an average gain of~5.4. For higher rates, we also see an interesting behavior depicted by the vertical lines. These lines indicate that several node pairs have an {\it infinite gain}. The infinite gain occurs because these nodes can not talk to each other at that particular rate due to the poor link quality; the network therefore becomes disconnected. We have 17 (5.6\%) node pairs that can not reach each other at both 2 and 5.5~Mbps and 33 (10.8\%) node pairs out of reach at 11~Mbps. For the network to be connected, we must then either use a lower rate (e.g., 1~Mbps) for the whole network at the cost of a lower throughput or use multirate anypath routing. For 2~Mbps, if we remove the node pairs with infinite gains, we have a gain of at least 91\% and up to~5.6, with an average of~3.2. For 5.5~Mbps, we have a gain up to 2.0, with an average of 22\%. Finally, for 11~Mbps, we have a gain up to 6.4, with an average of 80\%.

Figure~\ref{fig:rates_pdf} shows the reason why multirate always performs better than single-rate anypath routing. In this graph, we show the distribution of the optimal transmission rates selected by each node to reach every other node. We can see that the optimal transmission rates are not concentrated at a single rate, but rather distributed among over several possibilities. We have 10.8\% of node pairs using 1~Mbps, 41\% using 5.5~Mbps, and 47\% using 11~Mbps as the optimal rate. Interestingly enough, no node pair selected 2~Mbps as the optimal rate since it was more beneficial to use another rate instead. If these rates were concentrated at a particular rate, then multirate and single-rate anypath routing would have the same cost. This assumption, however, does not hold in practice and therefore multirate anypath routing always has a higher performance, sometimes manyfold higher as shown in Figure~\ref{fig:gain_ap}, than single-rate anypath routing.

\section{Related Work}
\label{sec:related}

Most of the work in anypath routing focuses on using a single transmission rate. The following works are all single-rate anypath routing schemes. 

Zorzi and Rao~\cite{zorzi03} use a combination of opportunistic and geographic routing in a wireless sensor network. The authors assume that sensor nodes are aware of their locations and this information is used for routing. The forwarding set of a given node is composed of the neighbors which are physically closer to the destination. Packets are broadcast and neighbors in the set forward the packet respecting the relay priority explained in Section~\ref{sec:anypath}. As an advantage, this routing procedure does not need any sort of route dissemination over the network. Using just the physical distance as the routing metric, however, may not be the best approach since it does not take link quality into account. We introduce the EATT routing metric that takes not only the link quality but also the multiple transmission rates into account during route calculation.

Ye {\it et al.}~\cite{ye05} present another single-rate opportunistic routing protocol for sensor networks. The key idea is that each packet carries a credit which is initially set by the source and is reduced as the packet traverses the network. Each node also maintains a cost for forwarding a packet from itself to the destination and nodes closer to the destination have smaller costs. Packets are sent in broadcast and a neighbor node forwards a received packet only if the credit in the packet is high enough. Just before forwarding the packet, its credit is reduced according to the node cost; therefore, more credits are consumed as the packet moves away from the shortest path. A mesh around the shortest path is then created on-the-fly for each packet. Yuan~{\it et~al.}~\cite{yuan05} use a similar idea for wireless mesh networks. Although packet delivery is improved, this routing scheme increases overhead since it is based on a controlled flooding mechanism. Therefore, robustness comes at the cost of duplicate packets. In our proposal, a packet is forwarded by a single neighbor in the forwarding set and a MAC mechanism, such as the one proposed by Jain and Das~\cite{jain08}, is in place to guarantee that no duplicate packets occur in the network.

Biswas and Morris~\cite{biswas05a} designed and implemented ExOR, an opportunistic routing protocol for wireless mesh networks. ExOR follows the same guidelines of single-rate anypath routing explained in Section~\ref{sec:anypath}. Basically, a node forwards a batch of packets and each neighbor in the forwarding set waits its turn to transmit the received packets. The authors implement a MAC scheduling scheme to enforce the relay priority in the forwarding set. As a result, a node only forwards a packet if all higher priority nodes failed to do so. The authors show that opportunistic routing increases throughput by a factor of two to four compared to single-path routing. Our results go beyond and show that an even better performance can be achieved with multirate anypath routing. Additionally, in our design, each packet is routed independently without storing any per-batch state at intermediate routers.

Chachulski {\it et al.}~\cite{chachulski07a} introduce MORE, a routing protocol which uses both opportunistic routing and network coding to increase the network end-to-end throughput. Upon the receipt of a new packet, a node encodes it with previously received packets and then broadcasts the coded packet. 
Results show that MORE allows a higher throughput than ExOR and single-path routing. Network coding, however, requires routers to store previous packets in order to code them with future packets, adding significant storage and processing overhead to the forwarding process. Furthermore, the authors only focus on opportunistic routing with a single transmission rate. 
Our results indicate that performance could be further improved with multirate anypath routing. An analysis of multirate anypath routing and network coding is also an open problem and an interesting topic for future work. 

Besides using a single bit rate, the above-mentioned systems also do not have a systematic approach for the selecting the forwarding set for a given destination. The selection is commonly based on the heuristic that if a neighbor has a smaller ETX distance to the destination, then it should be in the forwarding set. However, the ETX is a single-path metric and do not represent correctly the node's true distance when using anypath routing. Zhong~{\it et~al.}~\cite{zhong06a} was the first to propose the expected anypath number of transmissions (EATX) metric described in Section~\ref{sec:anypath}, which was also used in~\cite{chachulski07b,dubois-ferriere07}. The authors propose an algorithm for forwarding set selection in~\cite{zhong07}, but this algorithm may not reach an optimal solution depending on the order that neighbors are tested.

Dubois-Ferrière {\it et al.}~\cite{dubois-ferriere07} introduced a shortest anypath algorithm capable of finding optimal forwarding sets. The authors generalize the well-known Bellman-Ford algorithm for anypath routing and prove its optimality. Performance tests in a wireless sensor network show that anypath routing significantly reduces the required number of transmissions from a node to the sink. Chachulski~\cite{chachulski07b} presents a generalization of Dijkstra's algorithm for anypath routing that is very similar to the one we independently derived in Section~\ref{sec:single-rate}, but the author does not provide any proof of optimality. Both of these algorithms, however, are designed for networks using a single transmission rate. Instead, our algorithm in Section~\ref{sec:multi-rate} generalizes anypath routing for multiple rates, giving nodes the ability to choose both the best rate and the best forwarding set to a particular destination. We also provide the proof of optimality for our algorithm. As a result, the optimality of the single-rate algorithm in~\cite{chachulski07b} is also proved since this is a particular case of our algorithm.

More recently, multiple transmission rates have been addressed in opportunistic routing. Radunovic {\it et al.}~\cite{radunovic08} presents an optimization framework to derive routing, scheduling, and rate adaptation schemes. Zeng~{\it et~al.}~\cite{zeng08} presents a linear-programming formulation to optimize the end-to-end throughput of opportunistic routing, considering multiple rates and transmission conflict graphs. However, in both cases the problem being solved is NP-hard. Heuristics are then applied to find a solution, which is not necessarily optimal. 

\section{Conclusions}
\label{sec:conclusions}

In this paper we introduced multirate anypath routing, a new routing paradigm for wireless mesh networks. We provided a solution to integrating opportunistic routing and multiple transmission rates. The available rate diversity imposes several new challenges to routing, since radio range and delivery probabilities change with the transmission rate. Given a network topology and a destination, we want to find both a forwarding set and a transmission rate for every node, such that their distance to the destination is minimized. We pose this as the {\it shortest multirate anypath problem}. Finding the rate and forwarding set that jointly optimize the distance from a node to a given destination is considered an open problem. To solve it, we introduced the EATT routing metric as well as the Shortest Multirate Anypath First (SMAF) algorithm and presented a proof of its optimality. Our algorithm has the same complexity as Dijkstra's algorithm for multirate single-path routing, being easy to implement in link-state routing protocols.

We conducted experiments in a 18-node 802.11b testbed to evaluate the performance of multirate over single-rate anypath routing. Our main findings are: (1) when the network uses a single bit rate, it may become disconnected since some links may not work at the selected rate; (2) multirate outperforms 11-Mbps anypath routing by 80\% on average and up to a factor of 6.4 while still maintaining full connectivity; (3) multirate also outperforms 1-Mbps anypath routing by a factor of 5.4 on average and up to a factor of 11.3; (4) the distribution of the optimal transmission rates are not concentrated at any particular rate, corroborating the assumption that hyperlinks in single-rate anypath routing usually do not transmit at their optimal rates. 

\section*{Acknowledgments}
This work was done in part while the first author was visiting the Ecole Polytechnique Fédérale de Lausanne (EPFL). We would like to thank Henri Dubois-Ferrière and Martin Vetterli for hosting the first author at EPFL and introducing anypath routing to him. We thank Deborah Estrin for her help and discussions over the years and for the CENS testbed. We also thank Eddie Kohler, Fan Ye, and Lixia Zhang for insightful comments on an early draft. We are grateful to Martin Lukac for his help with the testbed. This work was supported by the U.S. National Science Foundation under Grants NBD-0721963 and CCF-0120778. Any opinions, findings, and conclusions or recommendations expressed in this material are those of the authors and do not necessarily reflect the views of the National Science Foundation.

\bibliographystyle{IEEEtran}
\bibliography{arxiv}

\appendices
\section{Proofs of the Lemmas}
\label{app:proofs}

{\it Lemma~\ref{lem:comparison}: For a fixed transmission rate, let $D_i$ be the distance of a node~$i$ via forwarding set~$J$ and let $D_i'$ be the distance via forwarding set $J' = J \cup \{n\}$, where $D_n \geq D_j$ for every node $j \in J$. We have $D_i' \leq D_i$ if and only if $D_i \geq D_n$.}
\begin{proof}
Assume that $D_1 \leq D_2 \leq \ldots \leq D_{n-1} \leq D_n$ and $J=\{1,2,\ldots,n-1\}$. Let $D_i = d_{iJ} + D_J$ be the distance from node~$i$ using the forwarding set~$J$. From~(\ref{eq:D_J_def}) and (\ref{eq:w_j_def}), the remaining-anypath cost $D_J$ is defined as
\begin{equation}
\label{eq:D_J_def2} D_J = \frac{\displaystyle\sum_{j=1}^{n-1} p_jD_j\prod_{k=1}^{j-1}(1-p_k)}{\displaystyle1-\prod_{j \in J}(1-p_j)}.
\end{equation}
Let $D_i' = d_{iJ'} + D_{J'}$ be this distance via $J'=J \cup \{n\}$, where
\begin{equation}
\label{eq:D_J'_def} D_{J'} = \frac{\displaystyle\sum_{j=1}^{n} p_jD_j\prod_{k=1}^{j-1}(1-p_k)}{\displaystyle1-\prod_{j \in J'}(1-p_j)}.
\end{equation}
If we define the probabilities $p_J$ and $p_{J'}$ as
\setlength{\arraycolsep}{0.0em}%
\begin{eqnarray}
\nonumber p_J &{}={}& 1 - \prod_{j \in J}  (1 - p_j) \\
p_{J'} &{}={}& 1 - \prod_{j \in J'} (1 - p_j) = 1 - (1-p_J)(1-p_n),
\end{eqnarray}
\setlength{\arraycolsep}{5pt}%
we can rewrite $D_{J'}$ in~(\ref{eq:D_J'_def}) in terms of $D_J$ in~(\ref{eq:D_J_def2}) as
\begin{equation}
\label{eq:aggregated} D_{J'} = \frac{p_JD_J + \left(1-p_J\right)p_nD_n}{1-\left(1-p_J\right)\left(1-p_n\right)}.
\end{equation}
An interesting result from~(\ref{eq:aggregated}) is that we can see the forwarding set $J$ as an ``aggregated vertex'' with delivery probability~$p_J$ and distance~$D_J$.

We now show that if $D_i \geq D_n$, then $D_i' \leq D_i$ as follows
\setlength{\arraycolsep}{0.0em}%
\begin{eqnarray}
\nonumber D_i &{}\geq{}& D_n \\
\nonumber \left(1-\frac{p_J}{p_{J'}}\right)D_i &{}\geq{}& \left(1-\frac{p_J}{p_{J'}}\right)D_n \\
\nonumber \left(1-\frac{p_J}{p_{J'}}\right)\left(d_{iJ} + D_J\right) &{}\geq{}& \frac{\left(1-p_J\right)p_n}{p_{J'}}D_n \\
\nonumber d_{iJ} + D_J &{}\geq{}& \frac{p_J}{p_{J'}}\left(d_{iJ} + D_J\right) + \frac{\left(1-p_J\right)p_n}{p_{J'}}D_n \\
\nonumber d_{iJ} + D_J &{}\geq{}& \frac{p_J}{p_{J'}}d_{iJ} + \frac{p_JD_J + \left(1-p_J\right)p_nD_n}{1-(1-p_J)(1-p_n)} \\
\nonumber d_{iJ} + D_J &{}\geq{}& d_{iJ'} + D_{J'} \\
\label{eq:D_i_and_D_i'} D_i &{}\geq{}& D_i'.
\end{eqnarray}
\setlength{\arraycolsep}{5pt}%
To show that if $D_i' \leq D_i$ then $D_i \geq D_n$, we just take~(\ref{eq:D_i_and_D_i'}) in the reverse order. Consequently, if $D_i > D_n$, it is better to use the forwarding set $J'= J \cup \{n\}$ instead of~$J$, since the distance~$D_i'$ via~$J'$ is always shorter than~$D_i$ via~$J$.
\end{proof}

\medskip
{\it Lemma~\ref{lem:acyclic}: The shortest distance~$\delta_i$ of a node~$i$ is always larger than or equal to the shortest distance $\delta_j$ of any node~$j$ in the optimal forwarding set $\phi_i$. That is, we have $\delta_i \geq \delta_j$ for all $j \in \phi_i$.}
\begin{proof}
Let $\phi_i = \{1,2,\ldots,n\}$ with $\delta_1 \leq \delta_2 \leq \ldots \leq \delta_n$ and let $D_i \geq \delta_i$ be the distance via the suboptimal forwarding set $J = \{1,2,\ldots,n-1\}$ with the same transmission rate. From Lemma~\ref{lem:comparison}, we know that if $\delta_i \leq D_i$, then $D_i \geq \delta_n$. From this, we show that $\delta_i \geq \delta_n$ as follows (assume $J'= \phi_i$)
\setlength{\arraycolsep}{0.0em}%
\begin{eqnarray}
\nonumber D_i &{}\geq{}& \delta_n \\
\nonumber \left(\frac{p_J}{p_{J'}}\right)D_i &{}\geq{}& \left(\frac{p_J}{p_{J'}}\right)\delta_n \\
\nonumber \left(\frac{p_J}{p_{J'}}\right)\left(d_{iJ} + D_J\right) &{}\geq{}& \left(\frac{p_J}{p_{J'}}\right)\delta_n \\
\nonumber \frac{p_J}{p_{J'}}d_{iJ} + \frac{p_J}{p_{J'}}D_J &{}\geq{}& \delta_n - \frac{\left(1-p_J\right)p_n}{p_{J'}}\delta_n \\
\nonumber \frac{p_J}{p_{J'}}d_{iJ} + \frac{p_JD_J + \left(1-p_J\right)p_n\delta_n}{p_{J'}} &{}\geq{}& \delta_n \\
\nonumber d_{iJ'} + D_{J'} &{}\geq{}& \delta_n \\
\delta_i &{}\geq{}& \delta_n.
\end{eqnarray}
\setlength{\arraycolsep}{5pt}%
Since $\delta_n$ is the largest distance in the optimal forwarding set~$\phi_i$,  then we know that if $\delta_i \geq \delta_n$ we must have $\delta_i \geq \delta_j$ for all $j \in \phi_i$.
\end{proof}

\medskip
{\it Lemma~\ref{lem:extra}: For any transmission rate, if a node~$i$ uses a node~$n$ in its optimal forwarding set~$\phi_i$ and $\delta_i = \delta_n$, we can safely remove~$n$ from~$\phi_i$ without changing~$\delta_i$. The link $(i,n)$ is said to be ``redundant.''}
\begin{proof}
By Lemma~\ref{lem:acyclic}, we know that $\delta_i \geq \delta_j$, for all $j \in \phi_i$. Since $\delta_i = \delta_n$, we also know that $\delta_n$ is the largest distance in the forwarding set. Let $\phi_i = \{1,2,\ldots,n\}$ with distances $\delta_1 \leq \delta_2 \leq \ldots \leq \delta_n$ and let $D_i = d_{iJ} + D_J$ be the distance from node~$i$ via forwarding set~$J = \{1,2,\ldots,n-1\}$. We now show that if $\delta_i = \delta_j$, then $D_i = \delta_i$ as follows
\setlength{\arraycolsep}{0.0em}%
\begin{eqnarray}
\nonumber \delta_i &{}={}& \delta_n \\
\nonumber d_{iJ'} + D_{J'} &{}={}& \delta_n \\
\nonumber d_{iJ'} + \frac{p_JD_J + \left(1-p_J\right)p_n\delta_n}{1-(1-p_J)(1-p_n)} &{}={}& \delta_n \\
\nonumber d_{iJ'} + \frac{p_J}{p_{J'}}D_J &{}={}& \delta_n - \frac{\left(1-p_J\right)p_n}{p_{J'}}\delta_n\\
\nonumber d_{iJ'} + \frac{p_J}{p_{J'}}D_J &{}={}& \frac{p_J}{p_{J'}}\delta_n\\
\nonumber \frac{p_{J'}}{p_J}d_{iJ'} + D_J &{}={}& \delta_n\\
\nonumber d_{iJ} + D_J &{}={}& \delta_n\\
D_i &{}={}& \delta_n.
\end{eqnarray}
\setlength{\arraycolsep}{5pt}%
Since $D_i = \delta_n$, the forwarding set~$J$ is also optimal and yields the same distance as $\phi_i$. We say the link $(i,n)$ is ``redundant'' since it does not help to reduce the distance any further.
\end{proof}

\medskip
{\it Lemma~\ref{lem:order}: If the shortest distances from the neighbors of a node~$i$ to a given destination are $\delta_1 \leq \delta_2 \leq \ldots \leq \delta_n$, then~$\phi_i^{(r)}$ is always of the form $\phi_i^{(r)} = \{1,2,\ldots,k\}$, for some $k \in \{1,2,\ldots,n\}$.}
\begin{proof}
We prove this lemma by reverse induction. In the basis step, we show that the forwarding set $\{k-1,k\}$ provides a shorter distance than $\{k\}$. In the inductive step, we prove that the forwarding set $\{j-1,j,j+1,\ldots,k-1,k\}$ always provides a shorter distance than $\{j,j+1,\ldots,k-1,k\}$.

{\bf Basis.} Let~$k$ be the node in~$\phi_i^{(r)}$ with the longest distance to the destination and let $D_i = d_{iJ} + D_{J}$ be the distance of node~$i$ to the destination via forwarding set $J=\{k\}$ with 
\begin{equation}
\label{eq:DJ} D_J = D_k. 
\end{equation}
In addition, let $D'_i = d_{iJ'} + D_{J'}$ be the distance of node~$i$ via forwarding set $J' =\{k-1,k\}$, where
\begin{equation}
\label{eq:DJ'} D_{J'} = \frac{p_{k-1}D_{k-1} + (1-p_{k-1})p_kD_k}{1-(1-p_{k-1})(1-p_k)}.
\end{equation}
We can see in $(\ref{eq:DJ'})$ that $D_{J'}$ is a weighted average between $D_{k-1}$ and $D_k$, with the weights summing to one. Comparing with~$(\ref{eq:DJ})$, we can see that in~$(\ref{eq:DJ'})$ we are moving some weight from~$D_k$ to~$D_{k-1}$. Since $D_{k-1} \leq D_k$, we have $D_{J'} \leq D_J$. For the EATX and EATT metric, we also have that $d_{iJ'} \leq d_{iJ}$ for $J \subseteq J'$. As a result, we have
\setlength{\arraycolsep}{0.0em}%
\begin{eqnarray}
\nonumber d_{iJ'} + D_{J'} &{}\leq{}&  d_{iJ} + D_J \\
\label{eq:D_i'_and_D_i} D_i' &{}\leq{}& D_i.
\end{eqnarray}
\setlength{\arraycolsep}{5pt}%
Therefore, the forwarding set $J'= \{k-1,k\}$ always provides \newpage\noindent a shorter distance than the forwarding set $J = \{k\}$. 

{\bf Inductive step.} \hspace{1mm} Let the forwarding sets $J$ and $J'$ be now redefined for the inductive step as $J = \{j,j+1,\ldots,k-1,k\}$ and $J' = \{j-1\} \cup J$. Let $D_i = d_{iJ} + D_J$ be the distance of node~$i$ via forwarding set~$J$ and let $D'_i = d_{iJ'} + D_{J'}$ be the distance of node~$i$ via forwarding set~$J'$. From~(\ref{eq:aggregated}), we can consider the forwarding set $J$ as an ``aggregated node'' with delivery probability $p_J$ and distance $D_J$. We then write $D_{J'}$ in terms of $D_J$ as
\begin{equation}
\label{eq:DJ'3} D_{J'} = \frac{p_{j-1}D_{j-1} + \left(1-p_{j-1}\right)p_JD_J}{1-\left(1-p_{j-1}\right)\left(1-p_J\right)}.
\end{equation}

By our definition of the remaining cost in~(\ref{eq:D_J_def}), $D_J$ is a weighted average of $D_j,D_{j+1},\ldots,D_{k-1},D_k$, which are all larger than $D_{j-1}$. We can see that in~$(\ref{eq:DJ'3})$ we are moving some weight from~$D_J$ to~$D_{j-1}$. Since $D_{j-1} \leq D_J$, we have $D_{J'} \leq D_J$. For the EATX and EATT metric, we also have that $d_{iJ'} \leq d_{iJ}$ for $J \subseteq J'$. Consequently, we have the same result of~(\ref{eq:D_i'_and_D_i}).

We know that the set $J'=\{j-1,j,j+1,\ldots,k-1,k\}$ is always better than $J=\{j,j+1,\ldots,k-1,k\}$. Combining the results from the basis and the inductive steps, we conclude that $\phi_i^{(r)}$ is always of the form $\phi_i^{(r)} = \{1,2,\ldots,k\}$, for some $k \in \{1,2,\ldots,n\}$.
\end{proof}

\medskip
{\it Lemma~\ref{lem:fset}: For a given transmission rate $r \in R$, assume that $\phi_i^{(r)} = \{1,2,\ldots,n\}$ with distances $\delta_1 \leq \delta_2 \leq \ldots \leq \delta_n$. If $D_i^j$ is the distance from node~$i$ using transmission rate~$r$ via forwarding set $\{1,2,\ldots,j\}$, for $1 \leq j \leq n$, then we always have $D_i^1 \geq D_i^2 \geq \ldots \geq D_i^n = \delta_i^{(r)}$.}
\begin{proof}
We want to prove that the forwarding set $J = \{1\}$ yields a larger distance than $J' = \{1,2\}$, which yields a larger distance than $J'' = \{1,2,3\}$ and so on until we get to the forwarding set $\phi_i^{(r)} = \{1,2,\ldots,n\}$. Thus far, we have
\begin{equation}
\label{eq:so_far} \delta_{n-1} \stackrel{(a)}{\leq} \delta_n \stackrel{(b)}{\leq} \delta_i \stackrel{\raisebox{.125cm}{\scriptsize $(c)$}}{=} D_i^n \stackrel{(d)}{\leq} D_i^{n-1}, 
\end{equation}
where $(a)$ and $(c)$ hold by definition, $(b)$ holds by Lemma~\ref{lem:acyclic}, and $(d)$ holds because $\phi_i^{(r)} = \{1,2,\ldots,n\}$ yields the shortest distance to the destination at rate~$r$. 

We now extend this result further for other forwarding sets. We first claim that
\begin{equation}
\delta_{n-2} \stackrel{(a)}{\leq} \delta_{n-1} \leq \delta_n \leq \delta_i = D_i^n \leq D_i^{n-1} \stackrel{(b)}{\leq} D_i^{n-2},
\end{equation}
where $(a)$ holds by definition and $(b)$ holds because of the following argument. By definition, we have $\delta_i = D_i^n \leq D_i^{n-2}$, since $D_i^n$ is the shortest distance to the destination. From~(\ref{eq:so_far}), we then have that $\delta_{n-1} \leq D_i^{n-2}$. Finally, if $\delta_{n-1} \leq D_i^{n-2}$, then $D_i^{n-1} \leq D_i^{n-2}$ by Lemma~\ref{lem:comparison}. 

The same argument can be made recursively until we get
\begin{equation}
\hspace{-0mm} \delta_{1} \leq \ldots \leq \delta_{n-1} \leq \delta_n \leq \delta_i = D_i^n \leq D_i^{n-1} \leq \ldots \leq D_i^{1}, \hspace{-4mm}
\end{equation}
from which we know $D_i^1 \geq D_i^2 \geq \ldots \geq D_i^n$ must be true.
\end{proof}

\end{document}